\documentclass{llncs}

\makeatletter\let\proof\@undefined\let\endproof\@undefined\makeatother
\usepackage{amsthm}

\usepackage{amssymb}
\usepackage{latexsym}

\usepackage[mathscr]{eucal}
\usepackage{graphicx}
\usepackage[colorlinks=true,citecolor=black,linkcolor=black,urlcolor=red]{hyperref}
\usepackage{caption}

%\newtheorem{theorem}{Theorem}
%\newtheorem{lemma}{Lemma}
%\newtheorem{proposition}{Proposition}
%\newtheorem{corollary}{Corollary}
%\newtheorem{definition}{Definition}
%\newtheorem{remark}{Remark}
%\newtheorem{example}{Example}
%\newtheorem{problem}{Problem}

%\newproof{proof}{Proof}
%\newdefinition{remark}{Remark}

\newcommand{\Z}{\mathbb{Z}}
\newcommand{\F}{\mathbb{F}}

\usepackage[mathscr]{eucal}
\usepackage{graphicx}
\usepackage[colorlinks=true,citecolor=black,linkcolor=black,urlcolor=red]{hyperref}

\def \F {{\mathbb F}}

\def \Z {{\mathbb Z}}
\newcommand{\rmv}[1]{}
\newcommand{\change}[1]{{{\color{black}#1}}}
\newcommand{\modify}[1]{{{\color{black}#1}}}
\newcommand{\revise}[1]{{{\color{black}#1}}}

\newcommand{\ChangeRed}[1]{{{\color{black}#1}}}
\newcommand{\Change}[1]{{{\color{black}#1}}}
\newcommand{\ChangeGreen}[1]{{{\color{black}#1}}}

\newcommand{\ChangeMS}[1]{{{\color{black}#1}}}
\newcommand{\ChangeW}[1]{{{\color{black}#1}}}

\begin{document}
\thispagestyle{empty}
\title{On the stability of periodic binary sequences with zone restriction}

\author{Ming Su\inst{1} \space and \space  Qiang Wang\inst{2}}
\institute{Department of Computer Science, Nankai University, Tianjin, China\\
\email{nksuker@gmail.com}
\thanks{Ming Su is supported by the China Scholarship Council.}
 \and School of Mathematics and
Statistics, Carleton University, Ottawa, Canada\\
\email{wang@math.carleton.ca}
\thanks{The research of Qiang Wang is partially funded by NSERC of Canada.}
}

\maketitle

\begin{abstract}
%Because disturbance of errors only happens in a certain range rather
%than in a whole large period, and computation for $k$-error linear
%complexity is not feasible in many cases,
Traditional global stability measure for sequences is hard to
determine because of large search space. We propose the $k$-error
linear complexity with a zone restriction for measuring the local
stability of sequences.
%In particular, we focus
%on periodic binary sequences with period $2^n$.
% or a prime.
\ChangeW{
Accordingly, we can efficiently
determine the global stability by studying a local stability for these sequences.
%particular when the period is $2^v p_1 ^{s_1} p_2 ^{s_2}\ldots p_n^{s_n}$, where $p_i$
%is odd, and 2 is a primitive root modulo $p_i$ and $p_i^2$ for all $1\leq i \leq n$,
 For several classes of sequences,   we demonstrate that the $k$-error linear complexity is
identical to the $k$-error linear complexity within  a zone, while the
length of a zone is much smaller than the whole period when the $k$-error
linear complexity is large.    These sequences  have periods $2^n$, or $2^v r$ ($r$ odd prime and $2$ is primitive modulo $r$), or $2^v p_1^{s_1} \cdots p_n^{s_n}$ ($p_i$ is an odd prime and $2$ is primitive modulo $p_i$ and $p_i^2$,  where $1\leq i \leq n$) respectively. In particular,  we completely determine the spectrum of $1$-error linear
complexity with any zone length for an arbitrary $2^n$-periodic
binary sequence.}

%For $2^n$-periodic binary sequences we have $L_k=L_{k;Z}$ when
%$Z=2^{\lfloor \log_2 (2^n-L_k) \rfloor+1}$,   and for some types of
%$2^v\cdot r$-periodic binary sequences, we obtain
%$L_{N,k}=L_{N,k;Z}$ for some $Z$ much smaller compared with the
%whole period $N$.
%we study the statistical property of the $k$-linear
%complexity with zone restriction.
%Particularly,  we explicitly
%determine all possible values of $1$-error linear complexity with a
%zone length $Z$ for any binary $2^n$-periodic sequence. Accordingly
%Moreover, we extend this result to several other binary sequences
%with even period that is not a power of $2$. We also completely
%determine the spectrum  of $1$-error linear complexity with any zone
%length for an arbitrary $2^n$-periodic binary sequence.
\end{abstract}

\begin{keywordname}
Stability $\cdot$ Linear complexity $\cdot$ $k$-linear complexity
$\cdot$ Zone restriction
\end{keywordname}\\
\textbf{Mathematics Subject Classification (2010)} \space
 94A60 $\cdot$ 94A55 $\cdot$ 65C10 $\cdot$ 68P25

\section{Introduction}

%Sequences with nice pseudorandomness properties are usually used as
%a key stream to encrypt messages. We will review some important
%pseudorandom measures as follows. %%
Let $S=(s_0,s_1,s_2,\ldots)$ be an $N$-periodic sequence with terms
in the finite field $\mathbb{F}_q$ of $q$ elements. We note that $N$
need not be the least period of the sequence. We  denote
$S=(s_0,s_1,\ldots,s_{N-1})^{\infty}$ and define
\change{$S^N(x)=s_0+s_1 x+\cdots+s_{N-1}x^{N-1}$}. The \emph{linear
complexity} of a periodic sequence over $\mathbb{F}_q$ is the length
of the shortest linear recurrence relation which the sequence
satisfies.
% and it is
% an important complexity measure for sequences used in stream ciphers.
In algebraic terms the linear complexity of an $N$-periodic sequence
is given by $L(S)=N-\deg(\gcd(1-x^N,\ChangeMS{S^N(x)}))$; see for
example \cite[p. 28]{TDR}.

\modify{ For an integer $k$, $0 \leq k \leq N$, the minimum linear
complexity of those sequences with not more than $k$ term changes in
a period $N$ from the original sequence $S$ is called the
\emph{$k$-error linear complexity} of $S$, denoted as $L_{N,k}(S)$,
i.e.,
 \begin{displaymath}
% L_{k}(S)= { {\large \textrm{min}} \atop  _{W_H(T)\leq k}}  \{L(S+T) \},
 L_{N,k}(S)= \min_{ W_H(T)\leq k }  \{L(S+T) \},
 \end{displaymath}
 where $T$ is an $N$-periodic sequence,
 $W_H(T)$ is the Hamming weight of $T$ in one
 period,   the
addition ``$+$" for two sequences is defined elementwise in
$\mathbb{F}_q$. A  sequence $T$ reaching the  $L_{N,k}(S)$ is called
an \emph{error vector} of the $k$-error linear complexity. When
$N=2^n$, we denote the $k$-error linear complexity of $S$ by
$L_k(S)$. }

\ChangeW{In addition to the Berlekamp-Massey algorithm
\cite{Massey} for computing the linear complexity with computational
complexity $O(N^2)$, there are efficient algorithms of several types
of periodic sequences with computational complexity $O(N)$, such as
the Games-Chan algorithm \cite{GC83} for computing the linear
complexity of a $2^n$-periodic binary sequence; the algorithm due to
 Meidl  \cite{MeidlDCC08} for computing the linear complexity of a $u
2^n$-periodic binary sequence, where $u$ is odd ;
the algorithm for computing the linear complexity of a sequence with
period $p^n$ over $\F_q$ \cite{XiaoWLI2000}, where $p$ is an odd
prime and $q$ is a prime and a primitive root $\bmod \; p^2$; and
the algorithm for computing the linear complexity of a sequence with
period $2p^n$ over $\F_q$ \cite{WeiXC02}, where $p$ and $q$ are odd
primes, and $q$ is a primitive root $\bmod \; p^2$. These algorithms work because the
factorization of $X^N-1$ is simple under these assumptions.
%and its factors cyclotomic polynomials are irreducible.
Comparatively, there are also   efficient algorithms of computing the $k$-error
linear complexity for certain types of sequences such as the Stamp-Martin algorithm
\cite{StampMartin} for computing the $k$-error linear complexity of
a $2^n$-periodic binary sequence, the algorithm for computing the
$k$-Error Linear Complexity of $p^n$-periodic sequences over
$\F_{p^m}$ \cite{KaidaUI}, and the algorithm for computing the
$k$-error linear complexity of a sequence with period $2p^n$ over
$\F_q$ \cite{ZhouJQ}.   We also remark that there are some  studies on
 the properties of $k$-error linear complexity of binary
 sequences, see \cite{HanCY07}, \cite{SM_IEICE}.
 Earlier, S$\breve{a}$l$\breve{a}$gean et al. studied approximation
algorithms for the $k$-error linear complexity of odd-periodic
binary sequences by using DFT and some relaxation
\cite{AS-ISIT,AS-SETA}. However, there is no efficient algorithm for calculating the
 $k$-error linear complexity for an arbitrary binary sequence, in particular, binary sequence with even period.  }

%Sequences with period $2^n$ have attracted a lot
%of attention \cite{Paterson2009}; one typical example is the  de
%Bruijin sequence of maximal $2^n$-periodic sequence generated by
%NFSR of stage $n$ \cite{GongG}.

%the other is the blocker cipher such as AES
%of length $2^n$.
% We also consider the $k$-error linear complexity of more general $2^v r$-periodic binary sequences, which is rarely studied.
%Note that A. S$\breve{a}$l$\breve{a}$gean et al. studied
%approximation algorithms for the $k$-error linear complexity of
%odd-periodic binary sequences by using DFT and some relaxations
%\cite{AS-ISIT,AS-SETA}.

 A well-designed sequence should not only  have a
large linear complexity, but also large $k$-error linear
complexities. This means its linear complexity should not decrease a
lot when $k$ errors occur; see \cite{StampMartin} and \cite{Ding}.
%its linear complexity should not decrease a lot
%when a few errors occur. Accordingly, the $k$-error linear
%complexity or the sphere complexity was proposed in
%\cite{StampMartin} and \cite{Ding}.
%C. Ding and G. Xiao proposed the concept of sphere
%complexity in \cite{Ding}. Stamp and Martin also presented the
%concept of $k$-error linear complexity in \cite{StampMartin}.
 In order to measure the stability of a given periodic
sequence, we have to consider $k$ errors \Change{that can occur
anywhere within the whole} period $N$.
%Except for a few cases the counting functions of the $k$-error linear complexity are still not know.
%Considering $k$ errors within one period $N$,
\Change{This means} the computational task is heavy because the
capacity of search space for all possible binary errors is
$\sum_{t=0}^k {N \choose t}$, which is very large for common $N$ and
moderate $k$. \Change{Indeed, it  becomes} exponential of $N$ when
$k$ is large, resulting in infeasible computations.
%Moreover, due to synchronization the attacker may only alter a few
%terms in a certain range rather than in a whole large period, or
%burst errors may happen in a zone rather then a whole period in
%communication system.
This motivates us to study $k$-error linear complexity with a zone
restriction.
%That is, we assume all $k$-errors happen in a
%consecutive zone with a given length $Z$, where $Z$ is a positive
%integer.
%Due to shift invariant property for the linear complexity,
\ChangeW{ Intuitively, there can be many error vectors that
reaching the $k$-error linear complexity.  We show later on that for many sequences we can find a window of proper
length $Z$ containing  at least one error
vector, no matter where  we  start with.
%Due to the repeat occurrence of the error vectors at many non-overlapping consecutive zones,
 For the convenience,  we may assume the zone starts at position 0 and ends at
position $Z-1$. }Therefore we define the {\it $k$-error linear
complexity with a zone of length $Z$}, denote by $(N,k; Z)$-error
linear complexity, as the minimum of all $k$-error linear complexity
such that these errors occur in positions from $0$ to $Z-1$. That
is,
\begin{eqnarray} \label{Definitoin-k;Z-LC}
% L_{k}(S)= { {\large \textrm{min}} \atop  _{W_H(T)\leq k}}  \{L(S+T) \},
 L_{N,k; Z}(S)= \min_{ W_H(T)\leq k  \atop T[i]=0, \; Z\leq i <N}  \{L(S+T) \}.
 \end{eqnarray}
 %As a result, determining the stability of a sequence would be comparatively easier.
Obviously, $L_{N,k; Z}(S)$ is easier to  compute and this  provides
a natural upper bound of $L_{N,k}(S)$.

\ChangeW{In this paper, we study the relation
between $L_{N,k; Z}(S)$ and $L_{N,k}(S)$ and prove that
$L_{N,k;Z}(S)=L_{N,k}(S)$ for several classes of  sequences and the
zone length $Z$ can be very small comparing to the period $N$. This
means that we can  efficiently determine  the global stability via a local
stability.  We focus on binary sequences with even period and large linear complexity,
in particular,  several classes of sequences with  periods $2^n$, or $2^v r$ ($r$ odd prime and $2$ is primitive modulo $r$), or $2^v p_1^{s_1} \cdots p_n^{s_n}$ ($p_i$ is an odd prime and $2$ is primitive modulo $p_i$ and $p_i^2$,  where $1\leq i \leq n$) respectively.

Sequences with period $2^n$ have attracted a lot of attention \cite{Paterson2009}; one typical example is the  de Bruijin sequence of maximal $2^n$-periodic sequence generated by
NFSR of stage $n$ \cite{GongG}. Despite that there is an efficient algorithm to compute $k$-error linear complexity of these sequences,  we still demonstrate our method by showing that there exists a small zone
of length $Z = 2^{\lceil \log_2 (2^n-L_k(S)) \rceil}$ containing the \emph{support} (positions of nonzero entries) of an
error vector reaching the $k$-error linear complexity for any $2^n$-periodic binary sequence. This means
that we can indeed reduce the global stability to a local stability.
Furthermore,  we completely describe the spectrum of 1-error linear
complexity with any given zone length.  This can help us to obtain the exact counting
functions for any $2^n$-periodic binary sequence.

Afterwards, we found two more classes of binary sequences such that their
global stability can be reduced to a local stability.  The first class of sequences has large linear complexity and $k$-error
linear complexity with period $2^v r$, such that $r$ is an odd prime and $2$ is a
primitive root modulo $r$. The length of a zone is $Z=2^{\lceil \log_2(N-L_{N,k}(S)) \rceil}$.
 More details can be found in Theorem~\ref{2^vr-1-zonelength}.
We want to emphasize that  our result applies to quite a lot of sequences.
  By Artin's conjecture, approximately 37\% of all primes satisfy that $2$ is a primitive
root modulo $r$. We also justify that there are high proportion of sequences who
have the required large linear complexity and $k$-error linear
complexity, among those sequences with period $2^v r$ where $2$ is primitive modulo $r$.  In particular, we show that  if the growth of $p$ is polynomial in terms of $r$ and $\sum_{t=0}^k {2^v r \choose t}< \frac{2^{r-1}}{2^v}$, then almost all these sequences have desired properties so that their global stability can be reduced to local stability.  The second class of sequences has  the period
$N=2^v p_1^{s_1} p_2 ^{s_2}\ldots p_n^{s_n}$, where $p_i$ is odd prime and
$2$ is a primitive root modulo $p_i$ and $p_i^2$ for all $1\leq i \leq
n$.
For any such $N$-periodic  binary sequence $S$ such that
$L(S)>L_{N,k}(S)\geq N-\min(2^v,p_1-2,\ldots, p_n-2)$, we  show in Theorem 4 that
 there exists a zone of length $Z=2^{\lceil \log_2(N-L_{N,k}(S))
 \rceil}$ such that $L_{N,k}(S)=L_{N,k;Z}(S)$. }

The rest of this paper is organized as follows. In Section
\ref{Sec:reduction}, we study the $(k;Z)$-error linear complexity
for any periodic binary sequence with period $2^n$, and find a
proper zone of length \modify{$Z = 2^{\lceil \log_2 (2^n-L_k(S))
\rceil} $} such that $L_{N,k}=L_{N,k;Z}$. \Change{The larger
$L_{N,k}$, the smaller zone length $Z$}. In Section
\ref{Sec:Stabistic property}, we study  the linear complexity
affected by 1-error occurrence within a zone of length $Z$, and give
the exact counting functions of the $1$-error linear complexity
with a restriction on zone length $Z$ for a random $2^n$-periodic
binary sequence.
%Moreover, we provide a nontrivial lower bound on the expected value $E_{k;Z}$ of $k$-error linear complexity with a zone length $Z$  for a random $2^n$-periodic binary sequence.
In Section~\ref{sec:N=2^vr}, \ChangeW{we  prove Theorems 3 and Theorem 4. }
% and can still find a proper
%zone such that $L_{N,k}=L_{N,k;Z}$.}
%such that the period is not a power of $2$.
% and give the expected $(k;Z)$-error
%linear complexity for a class of sequences with a prime period.
%Finally we conclude in Section \ref{Conclusion}.

%In Section
%\ref{1-error} we study  the linear complexity affected by 1-error
%occurrence within a zone of length $Z$, and give the exact counting
%functions and the expected $1$-error linear complexity $\mathcal{E}_{1;Z}$
%with a restriction on zone length $Z$  for a random $2^n$-periodic
%binary sequence. Then in Section \ref{OtherValues}, we provide a
%nontrivial lower bound on the expected value $E_{k;Z}$ of $k$-error
%linear complexity with a zone length $Z$  for a random
%$2^n$-periodic binary sequence, and give the expected $(k;Z)$-error
%linear complexity for a class of binary or non-binary sequence with
%a prime period. Afterwards, we study the $(k;Z)$-error linear
%complexity for some types of $2^v \cdot r$-periodic binary
%sequences, and find a proper zone of length $Z$ such that
%$L_{N,k}=L_{N,k;Z}$ in Section \ref{sec:N=2^vr}.

%Moreover, we provide another lower
%bound on  the expected value $\mathcal{E}_{N,k;Z}$,  for a random periodic
%sequence over $\F_q$ with period a prime $N$ such that $\gcd(N,q)=1$
%in section \ref{sec:Nprime}.
%Finally we conclude in section \ref{Conclusion}.

%\section{Errors in a zone of $2^n$-periodic binary sequences}
\section{Reduction from global stability to local stability with a zone
restriction for any binary sequence of period $2^n$}
\label{Sec:reduction}

In this section, we show that the global stability can be reduced to
local stability with zone restriction for any binary sequence of
period $2^n$.
%\Change{In this case, $q=2$.} \ChangeMS{ There exists
%the Stamp-Martin algorithm \cite{StampMartin} with computational
%complexity $O(N)$ that can determine the $k$-error linear complexity
%of a $2^n$-periodic binary sequence. We will explain our main idea
%of reduction by taking such a typical sequence.}
\ChangeW{We denote  the binary sequence with the only nonzero entry
`$1$'  at position $j$ by $E_1(j)$, $0\leq j< 2^n$, in each period
$2^n$,  and the expected $(k;Z)$ linear complexity of $N$-periodic
sequences  by $\mathcal{E}_{N,k;Z}$}. Without causing any confusion,
we denote $E_{k;Z}$ by the expected $(k;Z)$ linear complexity of
$2^n$-periodic binary sequences, and $\mathcal{N}_k(c)$  by the
number of sequences achieving $k$-error linear complexity value $c$
of $2^n$-periodic binary sequences. Because the sequence $S$ has
period $2^n$, we only need to consider the multiplicity of $x=1$ as
a root of \Change{$S^{2^n}(x)$} when we compute the linear
complexity $L(S) = 2^n - deg(gcd(1-X^{2^n}, S^{2^n}(X))$.  It is
straightforward to derive  the following useful result.

\begin{lemma} \label{LC-AdditiveProperty}
$L(S+S^\prime)= \max\{L(S), L(S^\prime)\}$ for two $2^n$-periodic
sequences $S, S'$ if $L(S^\prime)\neq L(S)$, and
$L(S+S^\prime)<L(S)=L(S^\prime)$ if $L(S)=L(S^\prime)$. In
particular $L(E_1(j)+E_1(j+i2^s))\leq 2^n-2^s$, \Change{where $1\leq
s < n$.}
\end{lemma}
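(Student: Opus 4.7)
My plan is to exploit the fact that in characteristic $2$ we have $1 - x^{2^n} = (1+x)^{2^n}$, so for any $2^n$-periodic binary sequence $S$ the linear complexity satisfies $L(S) = 2^n - m(S)$, where $m(S)$ denotes the multiplicity of the factor $(1+x)$ in the generating polynomial $S^{2^n}(x)$ (equivalently, the multiplicity of $x=1$ as a root). Both assertions of the lemma then reduce to tracking how this multiplicity behaves under addition of polynomials in $\mathbb{F}_2[x]$.

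Concretely I would write $S^{2^n}(x) = (1+x)^{m} A(x)$ and $S'^{2^n}(x) = (1+x)^{m'} B(x)$ with $A(1) = B(1) = 1$. If $m \neq m'$, say $m < m'$, then $(S+S')^{2^n}(x) = (1+x)^m \bigl(A(x) + (1+x)^{m'-m} B(x)\bigr)$, and the bracket evaluates to $1$ at $x=1$, so it is not divisible by $(1+x)$. Hence $m(S+S') = m$, giving $L(S+S') = 2^n - m = \max\{L(S),L(S')\}$. If instead $m=m'$, then $(S+S')^{2^n}(x)= (1+x)^m\bigl(A(x)+B(x)\bigr)$, and $A(1)+B(1) = 0$ in $\mathbb{F}_2$ forces $(1+x)$ to divide the bracket, so $m(S+S') > m$ and consequently $L(S+S') < L(S)$ (covering the degenerate subcase $S = S'$, where $S+S'$ is the zero sequence and $L = 0$).

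For the ``in particular'' clause, I would observe that $E_1(j)^{2^n}(x) = x^j$ and $E_1(j+i2^s)^{2^n}(x) = x^{j'}$, where $j' \equiv j+i2^s \pmod{2^n}$. Since $2^s$ divides $2^n$, the exponent difference $j'-j$, if nonzero after reduction, is a positive multiple of $2^s$, say $c\cdot 2^s$. Thus the sum polynomial factors (up to $x^{\min(j,j')}$) as $1 + x^{c\cdot 2^s} = (1+x^c)^{2^s}$ in $\mathbb{F}_2[x]$, which is divisible by $(1+x)^{2^s}$. Therefore $m(E_1(j)+E_1(j+i2^s)) \geq 2^s$, so $L(E_1(j)+E_1(j+i2^s)) \leq 2^n - 2^s$.

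The proof is essentially bookkeeping on $(1+x)$-adic valuations in $\mathbb{F}_2[x]$; the only mild subtleties are reducing the index $j+i2^s$ modulo $2^n$ to stay within the polynomial representation, and the degenerate case in which the two sequences coincide, both of which are handled trivially by the observation that the zero sequence has linear complexity $0$.
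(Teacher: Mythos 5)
Your proof is correct and follows essentially the same route as the paper: both track the multiplicity of $(1+x)=(1-x)$ in the generating polynomials, split into the cases $L(S)\neq L(S')$ and $L(S)=L(S')$ by evaluating the cofactor at $x=1$, and obtain the final claim from the factorization $x^j+x^{j+i2^s}=x^j(1+x^i)^{2^s}$. Your added care about reducing the exponent modulo $2^n$ and about the degenerate case $S=S'$ is a harmless refinement of the same argument.
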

\begin{proof}
\Change{ Obviously,  we can write
$S^{2^n}(x)=(1-x)^{2^n-L(S)}g_S(x)$ for the sequence $S$, where
$g_S(1) =1$. Simlarly,
$S'^{2^n}(x)=(1-x)^{2^n-L(S)}g_{S^\prime}(x)$, where
$g_{S^\prime}(1)=1$.  If $L(S^\prime)\neq L(S)$, then
$S^{2^n}(x)+s'^{2^n}(x)=(1-x)^{2^n-\max\{L(S),L(S^\prime)\}}
\tilde{g}(x)$, and $\tilde{g}(1)=1$. Therefore we have
$L(S+S^\prime)= \max\{L(S), L(S^\prime)\}$. If $L(S^\prime)= L(S)$,
we obtain
$S^{2^n}(x)+S'^{2^n}(x)=(1-x)^{2^n-L(S)}(g_S(x)+g_{S^\prime}(x))$,
and $g_S(1)+g_{S^\prime}(1)=0$.} Therefore,
$L(S+S^\prime)<L(S)=L(S^\prime)$. In particular, the degree of
$\gcd(x^{j}+x^{j+i2^s}, x^{2^n}-1)$ is at least $2^s$ because
$x^{j}+x^{j+i2^s}=x^j(1+x^i)^{2^s}$.   \end{proof}

It is well known from  \cite{WMeidl} and \cite{SM_IEICE} that $L_k
\neq 2^n-2^s$ for any integer $s<n$, \ChangeRed{$k>0$}.
 In particular, when $k=1$, we can determine the number of error vectors $E_1(j)$ such that $L(S+E_1(j)) = L_1(S)$.

\begin{lemma} \label{ErrorVectorsSpan}
For a sequence $S$ satisfying $L(S)=2^n$ and
$2^n-2^s<L_1(S)<2^n-2^{s-1}$, we have exact $2^{n-s}$ error vectors
with Hamming weight $1$ in one period achieving $L_1(S)$.
\end{lemma}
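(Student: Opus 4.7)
My plan is to pick one fixed position $j_0$ with $L(S+E_1(j_0))=L_1(S)$ and, using the additivity Lemma~\ref{LC-AdditiveProperty}, characterize exactly which other $j$ satisfy $L(S+E_1(j))=L_1(S)$. The existence of such a $j_0$ is automatic: since $L(S)=2^n$ we have $S^{2^n}(1)=1$, so every single-bit change forces the modified polynomial to vanish at $x=1$ and hence $L(S+E_1(j))<2^n$ for all $j$; therefore the minimum $L_1(S)<2^n$ is attained by some weight-$1$ error vector.

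First I would pin down $L(E_1(j)+E_1(j'))$ for arbitrary $j\neq j'$. Writing $x^j+x^{j'}=x^{\min(j,j')}(1+x^d)$ with $d=|j-j'|$, and decomposing $d=2^{s'}d'$ with $d'$ odd, one has $1+x^d=(1+x^{d'})^{2^{s'}}$ over $\mathbb{F}_2$. Because $1+x^{d'}$ has $x=1$ as a simple root when $d'$ is odd, this yields
\[
\gcd\bigl((1-x)^{2^n},\,x^j+x^{j'}\bigr)=(1-x)^{2^{s'}},
\]
so $L(E_1(j)+E_1(j'))=2^n-2^{s'}$, where $2^{s'}$ is the exact power of $2$ dividing $j-j'$.

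Next, setting $S^\star=S+E_1(j_0)$ and writing $S+E_1(j)=S^\star+(E_1(j_0)+E_1(j))$, the hypothesis $2^n-2^s<L_1(S)<2^n-2^{s-1}$ forces $L_1(S)$ to lie strictly between two consecutive values of the form $2^n-2^{s'}$, so in particular $L_1(S)$ never equals $L(E_1(j_0)+E_1(j))$. Lemma~\ref{LC-AdditiveProperty} therefore yields
\[
L(S+E_1(j))=\max\{L_1(S),\,2^n-2^{s'}\},
\]
with $s'$ the $2$-adic valuation of $j-j_0$ (and $s'=\infty$ when $j=j_0$). This maximum equals $L_1(S)$ precisely when $s'\geq s$, i.e.\ when $2^s\mid(j-j_0)$; otherwise $s'\leq s-1$ and the maximum is at least $2^n-2^{s-1}>L_1(S)$. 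Counting residues in $\{0,1,\ldots,2^n-1\}$ satisfying $2^s\mid(j-j_0)$ gives exactly $2^{n-s}$ such $j$, as claimed.

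I expect the main obstacle to be the precise characteristic-$2$ computation of $L(E_1(j)+E_1(j'))$ in the first step; once the factorization $1+x^d=(1+x^{d'})^{2^{s'}}$ is available, the remaining bookkeeping is a direct consequence of Lemma~\ref{LC-AdditiveProperty} together with the strict interlacing assumption on $L_1(S)$.
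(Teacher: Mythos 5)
Your proof is correct and follows essentially the same route as the paper: fix one achieving position $j_0$, use the additivity lemma together with the fact that $L(E_1(j)+E_1(j'))=2^n-2^{s'}$ (with $2^{s'}$ the exact power of $2$ dividing $j-j'$) to show that exactly the positions congruent to $j_0$ modulo $2^s$ achieve $L_1(S)$, giving the count $2^{n-s}$. Your version is if anything slightly tidier, since you make the existence of $j_0$ and the exact value of $L(E_1(j)+E_1(j'))$ explicit where the paper only records an inequality.
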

\begin{proof}
Suppose there is $E_1(j)$ such that $0\leq j < 2^s$ and
$L(S+E_1(j))=L_1(S)$, we claim that we must have a set of error
vectors at positions $j+i2^s$, where $0\leq i <2^{n-s}$ and the
addition $+$ is performed modulo $2^n$, such that
$L(S+E_1(j+i2^s))=L(S+E_1(j)+E_1(j)+E_1(j+i2^s)) = L(S+E_1(j))$.
Since $ L(E_1(j)+E_1(j+i2^s)) \Change{\leq}  2^n-2^s < L_1(S) =
L(S+E_1(j))$, we conclude the above claim by Lemma
\ref{LC-AdditiveProperty}.

For any other error vector $E_1(j^\prime)$ such that
$j^\prime-j\equiv d ~(mod~2^s)$ and $0<d<2^s$, the largest
nonnegative integer $t$ such that $2^t \mid (j^\prime-j)$ must
satisfy $t < s$.  Hence the degree of $\gcd(x^{j}+x^{j^\prime},
x^{2^n}-1)$ is exactly $2^t \leq 2^{s-1}$ and thus $L(E_1(j) +
E_1(j^\prime)) = 2^n -2^t \geq 2^n -2^{s-1} > L_1(S)$. Therefore
$L(S+E_1(j^\prime)) =L(S+E_1(j)+E_1(j)+E_1(j^\prime)) =
L(E_1(j)+E_1(j^\prime)) > L_1(S)$. This shows that there is exactly
one error vector $E_1(j)$ such that $0\leq j < 2^s$ and $L(S+E_1(j))
= L_1(S)$. Hence there are exactly $2^{n-s}$ error vectors in the
whole period achieving $L_1(S)$.
     \end{proof}

%For studying the tendency of random behavior, %we may assume $n\geq 4$.

%omit this lag.

%\subsection{Preliminaries}
%
%In order to study the linear complexity of $2^n$-periodic sequences,
%we review the following definition proposed in \cite{SMWCC2009} as a
%tool to analyze the multiplicity of the root 1, which is closely
%related to the linear complexity.
%\begin{definition} \label{Def_Multiplicity}
%If $\deg(\gcd(1-x^{2^n},f(x)))=\deg(\gcd(1-x^{2^n},g(x)))$, where
%$f(x)$, $g(x)$ are  polynomials over $\mathbb{F}_2$, then we define
%$f(x)\sim g(x)$.
%\end{definition}
%%It is obvious that $f(x)\sim f(x) x^t$, where $t$ is a positive
%%integer.
%For every positive integer $t$, it is obvious that $f(x)\sim f(x)
%x^t$.
%
%\begin{lemma} \label{Lem_Multiplicity}
%If $f_1(x)\sim f_2(x)$, $g_1(x)\sim g_2(x)$, and
%$\deg(\gcd(1-x^{2^n},f_1(x)))\neq \deg(\gcd(1-x^{2^n},g_1(x)))$,
%then $f_1(x)+g_1(x)\sim f_2(x)+g_2(x)$.
%\end{lemma}
%
%
%\begin{lemma} \label{Lem_MultiplicityPlus}
%If $\deg(\gcd(1-x^{2^n},f(x)))=A$, $\deg(\gcd(1-x^{2^n},g(x)))=B$,
%$A\neq B$, where $f(x),g(x)$ are  polynomials over $\mathbb{F}_2$,
%then we have $f(x)+g(x)\sim (1-x)^{\min(A,B)}$. This also implies
%for two $2^n$-periodic sequences $E, E'$, if $L(E')<L(E)$, then
%$L(E+E')=L(E)$.
%\end{lemma}
%
%

For any positive integer $m$, we let $E_m$ denote \modify{a binary
vector of length $2^n$  with Hamming weight $m$. For example, assume
$E_m$ has `1' at positions $i_1, i_2,\ldots, i_{m}$, where $0\leq
i_1<i_2<\ldots <
 i_m \leq 2^n-1$}.  Then we define the support of $E_m$ as  $Supp(E_m)=\{ i_1, i_2,\ldots, i_m \}$.
 Now we can show that there exist at least one error vector whose support  is  contained in a smaller zone.

\begin{lemma} \label{Lem:ProperZone}
Let $S$ be a binary sequence with period $2^n$. Suppose $2^n-2^s<
L_k(S)<2^n-2^{s-1}$ for some integer $s$. Then there exists at least
one error vector $E_m$ of weight $m$, $m\leq k$ such that
$L(S+E_m)=L_k(S)$ and $supp(E_m) \subseteq [0, 2^s)$.
\end{lemma}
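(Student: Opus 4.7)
\medskip
\noindent\textbf{Proof proposal.} The plan is to start from any error vector $E_m$ of weight $m\leq k$ satisfying $L(S+E_m)=L_k(S)$ and iteratively relocate every support position that lies outside $[0,2^s)$ to its residue modulo $2^s$, verifying at each step that neither the linear complexity nor the weight is spoiled. The engine will be Lemma~\ref{LC-AdditiveProperty}, combined with the strict inequality $L_k(S) > 2^n - 2^s$ supplied by the hypothesis.

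For a single move, I would fix $j \in \mathrm{supp}(E_m)$ with $j \geq 2^s$, write $j = j' + i\, 2^s$ with $0 \leq j' < 2^s$ and $i \geq 1$, and consider the modified vector $E_m' := E_m + E_1(j) + E_1(j')$. Lemma~\ref{LC-AdditiveProperty} gives $L(E_1(j)+E_1(j')) \leq 2^n - 2^s$, which is strictly less than $L_k(S) = L(S+E_m)$. Applying the same lemma to $(S+E_m) + (E_1(j)+E_1(j'))$ therefore yields $L(S+E_m') = L(S+E_m) = L_k(S)$. The weight of $E_m'$ equals $m$ if $j' \notin \mathrm{supp}(E_m)$ and $m-2$ otherwise, so in either case it remains at most $k$.

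Crucially, this move deletes $j$ from the support while any newly occupied position $j'$ lies inside the window, so the number of support positions outside $[0,2^s)$ drops by exactly one. Iterating finitely many times produces an error vector whose support lies entirely in $[0,2^s)$, of weight at most $k$, still achieving $L_k(S)$. The boundary case $s=n$ is vacuous since the window is then the full period, and the case $s=0$ is excluded because $L_k(S) < 2^n - 2^{-1}$ is impossible for an integer.

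The only delicate point will be guaranteeing that the replacement never lowers the linear complexity below $L_k(S)$; this is automatic from the definition of $L_k(S)$ as the minimum over all weight-$\leq k$ vectors, and it is also built into the ``strict maximum'' clause of Lemma~\ref{LC-AdditiveProperty} used above. Once that is secured, the remainder of the argument is a clean induction on the number of support positions lying outside the target window.
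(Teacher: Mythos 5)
Your proposal is correct and follows essentially the same route as the paper: take an error vector achieving $L_k(S)$, replace an out-of-window position $j$ by $j \bmod 2^s$ via $E_m' = E_m + E_1(j) + E_1(j \bmod 2^s)$, use $L(E_1(j)+E_1(j \bmod 2^s)) \leq 2^n-2^s < L_k(S)$ together with Lemma~\ref{LC-AdditiveProperty} to keep the complexity at $L_k(S)$, and iterate. Your explicit handling of the weight bookkeeping (weight $m$ or $m-2$, never exceeding $k$) is a small point the paper leaves implicit, but the argument is the same.
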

\begin{proof}
There exists an error vector $E_m$, $m\leq k$ of Hamming weight $m$
so that $L(S+E_m)=L_k$.
% Without loss of generality, we may set the zone as $[0,Z)$ where $Z=2^s$. Suppose
% $E_m$ has `1'  at positions $i_1, i_2,\ldots, i_{m-1},
% i_m$, where $0\leq i_1<i_2<\ldots<i_{m-1}<
% i_m$, and let $Supp(E_t)=\{ i_1, i_2,\ldots, i_{m-1},i_m \}$.
 If $i_m \geq 2^s$, then we can define a new vector  \ChangeRed{$E_{m}^{'}:=E_m+E_1(i_m)+E_1( i_m \bmod  {2^s})$}  so that
\begin{eqnarray*}
L(S+E_{m}^{'})=L \left(S+E_m + E_1(i_m)+E_1( i_m \bmod {2^s})
\right).
\end{eqnarray*}
Because
\begin{eqnarray*}
L(S+E_m)=L_k>2^n-2^s \geq L(E_1(i_m)+E_1( i_m \bmod {2^s})),
\end{eqnarray*}
we get $L(S+E_{m'})=L_k$,
%with  $i_t$ change in the $Supp(Err_t)$,
%where $t'\leq t \leq k$.
where $Supp(E_m')=Supp(E_m)\setminus \{i_m\} \cup \{ i_m \bmod {2^s}
\}$.
 Therefore, we can consecutively adjust those entries of $E_m$ so that we can find
 $\bar{E}_m$ such that $supp(\bar{E}_m) \subseteq  [0, 2^s)$ and $L(S+\bar{E}_m)=L_k$.
\end{proof}

\ChangeW{
\begin{remark}
Actually,
%we do not have limitation the window of length $2^s$
%starting at 0.
Lemma \ref{Lem:ProperZone} can be extended as the following: there exists at least
one error vector $E_m$ such that $L(S+E_m)=L_k(S)$ and $supp(E_m)
\subseteq [a, a+2^s)$, for any $0\leq a \leq N-2^s$. In
Definition (\ref{Definitoin-k;Z-LC}), we just set the default zone
starting at 0.
\end{remark}
}

%\begin{theorem}
%Suppose for a $2^n$-periodic binary sequence, we have $2^n-2^s<
%L_k<2^n-2^{s-1}$ for some integer $s$, then for a proper zone of
%length $Z=2^s$, we obtain $L_k=L_{k;Z}$.
%\end{theorem}

Because of the assumption $2^n - 2^s < L_k(S) < 2^n - 2^{s-1}$, we
derive $2^{s-1} < 2^n -L_k(S)  < 2^s$.  Let $Z = 2^{\lceil \log_2
(2^n-L_k(S)) \rceil}$. Then $L_{k;Z}(S)\leq L_k(S)$ by  Lemma
\ref{Lem:ProperZone}. Conversely, it is always true that $L_k(S)\leq
L_{k;Z}(S)$.  Therefore we obtain the following theorem.

\ChangeW{
\begin{theorem}\label{thm:EquivalenceofZoneLC}
Let $S$ be any $2^n$-periodic binary sequence. For any positive
integer $k$, there always exists $k^\prime\leq k$ such that
$k^\prime   \leq  Z=2^{\lceil \log_2 (2^n-L_k(S)) \rceil}$ and
$L_k(S)=L_{k^\prime}(S) = L_{k^\prime;Z}(S)$.
%Let $k$ be a positive integer and $S$ be  any $2^n$-periodic binary
%sequence. If
% \[
% k \leq  Z=2^{\lceil \log_2 (2^n-L_k(S)) \rceil},
% \]
% then $L_k(S)=L_{k;Z}(S)$. Otherwise, there exists  $k^\prime \leq k $ such that
%\[
%k^\prime   \leq  Z=2^{\lceil \log_2 (2^n-L_k(S)) \rceil}
%\]
%and $L_k(S)=L_{k^\prime}(S) = L_{k^\prime;Z}(S)$.
\end{theorem}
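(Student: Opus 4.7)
The plan is to derive the statement directly from Lemma~\ref{Lem:ProperZone} together with the structural restriction on the admissible values of $L_k(S)$ already recorded in the excerpt.

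First, I would invoke the cited fact from Meidl \cite{WMeidl} and Stamp--Martin \cite{SM_IEICE} that $L_k(S) \neq 2^n - 2^s$ for every $k \geq 1$ and every $s$ with $0 \leq s < n$. Combined with the bound $L_k(S) < 2^n$ (which holds for $k \geq 1$ by a parity argument on $\sum_i s_i$, since any single bit flip changes the parity and hence reduces $L$), this means there exists a unique integer $s$ with $1 \leq s \leq n$ satisfying $2^n - 2^s < L_k(S) < 2^n - 2^{s-1}$. The degenerate boundary $L_k(S) = 0$ is absorbed into the case $s = n$, for which $Z = 2^n$ equals the whole period and the conclusion is immediate. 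Otherwise, $2^{s-1} < 2^n - L_k(S) < 2^s$, so $\lceil \log_2(2^n - L_k(S)) \rceil = s$, and thus $Z = 2^s$ is precisely the zone length produced by Lemma~\ref{Lem:ProperZone}.

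Second, I would apply Lemma~\ref{Lem:ProperZone} with this $s$ to obtain an error vector $E_m$ of Hamming weight $m \leq k$ such that $L(S + E_m) = L_k(S)$ and $supp(E_m) \subseteq [0, 2^s) = [0, Z)$. Setting $k' := m$, the inequality $k' \leq k$ is immediate, and $k' \leq Z$ follows because the support of $E_m$ has cardinality $m = k'$ and is contained in an interval of length $Z$. It then remains to verify the three-way equality $L_k(S) = L_{k'}(S) = L_{k';Z}(S)$. Monotonicity of the $j$-error linear complexity in $j$ gives $L_{k'}(S) \geq L_k(S)$, while the existence of the weight-$k'$ perturbation $E_m$ realizing $L_k(S)$ gives $L_{k'}(S) \leq L(S + E_m) = L_k(S)$; equality follows. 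For the zone-restricted version, $L_{k';Z}(S) \geq L_{k'}(S)$ is clear from the definition since restricting the support shrinks the feasible set of the minimization, while $E_m$ itself satisfies the zone constraint and hence $L_{k';Z}(S) \leq L(S + E_m) = L_k(S)$.

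I do not anticipate a substantive obstacle; rather, the only delicate point is ensuring that the formula $Z = 2^{\lceil \log_2(2^n - L_k(S)) \rceil}$ matches exactly the $s$ supplied by Lemma~\ref{Lem:ProperZone}. This is precisely where the Meidl--Stamp--Martin exclusion $L_k(S) \neq 2^n - 2^s$ is essential: it forces both inequalities $2^n - 2^s < L_k(S) < 2^n - 2^{s-1}$ to be strict, which in turn makes the ceiling function single out the correct exponent. Once this calibration is in place, the rest is a purely formal combination of Lemma~\ref{Lem:ProperZone}, the definition of $L_{k;Z}$, and the monotonicity of $L_j$ in $j$.
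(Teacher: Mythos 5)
Your proof is correct and follows essentially the same route as the paper: it feeds the excluded-values fact $L_k(S)\neq 2^n-2^s$ into Lemma~\ref{Lem:ProperZone} to obtain an error vector supported in $[0,Z)$ with $Z=2^{\lceil\log_2(2^n-L_k(S))\rceil}=2^s$, sets $k'$ equal to its weight, and finishes by monotonicity of $L_j$ in $j$ together with the trivial inequalities between $L_{k'}(S)$ and $L_{k';Z}(S)$. The only differences are presentational: you make the ceiling calibration and the degenerate case $L_k(S)=0$ explicit inside the argument, whereas the paper handles these in the surrounding discussion and phrases its proof as a case split on $k\le Z$ versus $k>Z$.
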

\begin{proof}
If $ k \leq  Z$,
 then $L_k(S)=L_{k;Z}(S)$. Otherwise if $k > Z$, then we can find an
 error vector $E_{k'}$ with the support in $[0, Z)$ such that $L_k(S) = L(S+E_{k^\prime}) $. The proof is complete because $L_{k^\prime}(S) = L_{k^\prime; Z}(S)$.
\end{proof}}
%In cryptographic applications, we require the $k$-error linear
%complexity of a random sequence to be large, consequently $Z$
%determined by Theorem \ref{thm:EquivalenceofZoneLC} is small
%compared with the whole period $N$, thus we can quickly determine
%$L_k$ by $L_{k;Z}$.

Theorem~\ref{thm:EquivalenceofZoneLC} shows that we can efficiently
verify the global stability of a binary sequence of period $2^n$
with large $k$-error linear complexity via a local stability.  If
$L_k(S)$ is big, then $Z$ can be very small.  If $k$ is large, then
we can reduce $k$ so that it is  bounded by the zone length as well.
\ChangeW{Of course, there is the degenerated case when $L_k(S)=0$, in this case,
we have to set $Z=2^n$. However, as we commented earlier, we focus on sequences with large $k$-error linear complexity and thus the zone length is significantly reduced.}

\section{Spectrum of $1$-error linear complexity with arbtrary zone length}
\label{Sec:Stabistic property}

%Next we will study the 1-error linear complexity with a zone of
%length $Z$ where $Z\geq 1$.

%\subsection{$2^n$-periodic binary sequences with 1-error in a zone}

In this section we assume $N=2^n$ and  $n\geq 4$. It is well known
that the linear complexity of a $2^n$-periodic sequence $S$ is $2^n$
if and only if it has odd Hamming weight. The 1-error linear
complexity of a $2^n$-periodic sequence can be any integer between
$0$ and $2^n-1$.
 However, if $S$ has an odd Hamming weight, then $L_1(S)$ can not be any integer of the form $2^n-2^s$ where $1\leq s \leq n$.
 %, see \cite{SM,W. Meidl,SM_ISIT}.
   For more details we refer the reader to \cite{SM,WMeidl,SM_PhDDis,SMWCC2009,SM_ISIT}.
% {\bf We define n error vector $E_1(j)$, with one error }

%\begin{lemma} \label{Lem_ExponentialEquivalence}
%We have $1+x^{u\cdot 2^{n-s}} \sim (1-x)^{2^{n-s}}$, where $u$ is an
%odd positive integer, and $1\leq s \leq n$.
%\end{lemma}

%This lemma implies for each span $2^s$, there exist only one error
%vector at one error position falling into this span.

%------ counting functions

If $S$ is a $2^n$-periodic sequence with even Hamming weight, then
$L_1(S) = L_0(S)$. In this case,  $L_{1, Z}(S) = L_0(S)$ for any
zone of length $Z$.  In order to study the distribution of
$L_{1,Z}$, we only need to consider $2^n$-periodic sequences with
odd Hamming weight.

\begin{theorem} \label{theorem_SeqOddHammingweight}
Let $S$ be a $2^n$-periodic sequence with odd Hamming weight and $L_1$ be its $1$-error linear complexity. Let $0<Z\leq 2^n$, %such that $2^a\leq Z < 2^{a+1}$, i. e.,
and $a=\lfloor \log_2(Z) \rfloor$.

%The $(1;Z)$-error linear complexity $L_{1;Z}(S)$ of $S$ satisfies the following:

%for the set of sequences with the
%fixed linear complexity $2^n$ (whose Hamming weight are odd), and
%the 1-error linear complexity $L_1$, where
%$2^n-2^s<L_1<2^n-2^{s-1}$, ($0\leq s \leq n$)
% then  we have the following counting functions: \\
\begin{enumerate}
\item For $2^n-2^s < L_1 < 2^n-2^{s-1}$ with some integer $s$, we have the following

\begin{itemize}
\item[(i)] if $s\leq a$, then $L_{1;Z}(S)=L_1$, and the number of such
  sequences $S$ is $2^{L_1-1+s}$;\\
\item[(ii)]  if $s>a$, then we have
 \[ L_{1, Z}(S) \in \{ L_1, 2^n-2^{s-1}, \ldots, 2^n - 2^a\}.  \]

 The number of all the sequences \modify{$S$} that achieve these values
 equals
   \begin{eqnarray*}
 \left\{
  \begin{array}{ll}
   Z \cdot 2^{L_1-1},  &  if~ L_{1,Z} =L_{1,Z}(S) = L_1; \\
   Z\cdot 2^{L_1+s-t-2} ,  &  if~ L_{1,Z} = 2^n-2^t, where \; a+1 \leq t \leq s-1; \\
    2^{L_1-1+s}-Z\cdot 2^{L_1+s-a-2}, &  if~ L_{1,Z} = 2^n-2^a.
  \end{array}
             \right.
  \end{eqnarray*}
  \end{itemize}

%  with $L_{1,Z}=2^{n}-2^t$ is
%  $\frac{Z\cdot 2^{L_1-1+s}}{2^{t+1}}$ for $a+1 \leq t \leq s-1$,  the number of all these sequences with
%  with $L_{1,Z}= L_1$ is $\frac{Z\cdot 2^{L_1-1+s}}{2^s}$, and  the number of all these sequences with
%  with $L_{1,Z}= 2^n-2^a$ is $2^{L_1-s} -\frac{Z\cdot 2^{L_1-1+s}}{2^s}$.

%   \begin{eqnarray*}
%   L_{1;Z}(S)=\left\{
%   \begin{array}{l}
%   L_1,  \\
% %  2^n-2^{a+1} &  \frac{Z}{2^{a+2}}\cdot 2^{L_1-1+s} \\
%  2^n-2^{s-1},  \\
%   \vdots  \\
%   2^n-2^{t},  \\
%   \vdots  \\
%    2^n-2^{a}
%   \end{array}
%              \right.
%   \end{eqnarray*}
%   The number  of all the sequence having the $1$-error linear complexity $L_1$ and with odd Hamming weight that achieves these values equals
%    \begin{eqnarray*}
%  \left\{
%   \begin{array}{l}
%    \frac{Z}{2^s}\cdot 2^{L_1-1+s}, \\
% %  2^n-2^{a+1} &  \frac{Z}{2^{a+2}}\cdot 2^{L_1-1+s} \\
%  \frac{Z}{2^{s}}\cdot 2^{L_1-1+s}, \\
%    \vdots \\
%     \frac{Z}{2^{t+1}}\cdot 2^{L_1-1+s}, \\
%   \vdots \\
%     (1-\frac{Z}{2^{a+1}})\cdot 2^{L_1-1+s},
%   \end{array}
%              \right.
%   \end{eqnarray*}
%  Moreover, there are $\frac{Z}{2^s} 2^{L_1-1+s}$ error vectors %achieving $L_1$,
%  $\frac{Z}{2^{t+1}}2^{L_1-1+s}$ error vectors achieving $2^n-2^t$ for %$a+1 \leq t \leq s-1$, and
%  $(1-\frac{Z}{2^{a+1}})\cdot 2^{L_1-1+s}$ error vectors achieving %$2^n-2^a$.
\item    For $L_1=0$, we have

\[ L_{1, Z}(S) \in \{ 0, 2^n-2^{n-1}, \ldots, 2^n - 2^a\}.  \]

 The number of all the sequences \modify{$S$} that achieve these values equals
   \begin{eqnarray*}
 \left\{
  \begin{array}{ll}
   Z,  & if~ L_{1,Z} = L_1; \\
    Z \cdot 2^{n-t-1},  & if~ L_{1,Z} = 2^n-2^t, ~where~ a+1 \leq t \leq n-1; \\
    2^{n}-Z\cdot 2^{n-a-1}, & if ~L_{1,Z} = 2^n-2^a.
  \end{array}
             \right.
  \end{eqnarray*}

%   \begin{eqnarray*}
%   L_{1;Z}(S)=\left\{
%   \begin{array}{l}
% 0, \\
%   2^n-2^{n-1}, \\
%   \cdots \\
%    2^n-2^{a+1}, \\
%    2^n-2^{a}
%   \end{array}
%              \right.
%   \end{eqnarray*}
% and  the number  of all the sequence having the $1$-error linear
% complexity $L_1$ and with odd Hamming weight that achieves these
% values equals
%   \begin{eqnarray*}
%  \left\{
%   \begin{array}{l}
%  \frac{Z}{2^n}\cdot 2^n, \\
%  \frac{Z}{2^n}\cdot 2^n, \\
%  \cdots \\
%   \frac{Z}{2^{a+2}}\cdot 2^n, \\
%  (1-\frac{Z}{2^{a+1}})\cdot 2^n
%   \end{array}
%              \right.
%   \end{eqnarray*}
\end{enumerate}
%where the left column is the value of $L_{1,Z}$, and the right
%column is the number of sequences in the set of sequences with the
%fixed linear complexity $2^n$ and the corresponding $L_{1,Z}$ value.
\end{theorem}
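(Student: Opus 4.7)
The plan is to reduce the spectrum question to a combinatorial problem about position residues, exploiting the structure already established in Lemma \ref{ErrorVectorsSpan}. For any odd-weight $S$ with $L_1(S)=c$ and $2^n-2^s<c<2^n-2^{s-1}$, Lemma \ref{ErrorVectorsSpan} tells us that the set $\{j : L(S+E_1(j))=c\}$ is a coset of the subgroup $2^s\Z/2^n\Z$ inside $\Z/2^n\Z$; let $j_0\in[0,2^s)$ be its unique representative. Writing $v_2$ for the $2$-adic valuation, combining Lemma \ref{LC-AdditiveProperty} with $L(E_1(j_0)+E_1(j'))=2^n-2^{v_2(j'-j_0)}$ forces $L(S+E_1(j'))=2^n-2^{v_2(j'-j_0)}$ for every other $j'$. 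Hence the entire profile $j\mapsto L(S+E_1(j))$ is encoded by the pair $(T,j_0)$ where $T:=S+E_1(j_0)$ satisfies $L(T)=c$, and computing $L_{1,Z}(S)$ reduces to minimizing this profile over $j'\in[0,Z)$.

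Next I set up the bijection
\[
\{S : L_1(S)=c\} \; \longleftrightarrow \; \{T : L(T)=c\}\times[0,2^s),
\]
via $S\mapsto(T,j_0)$ and $(T,j_0)\mapsto T+E_1(j_0)$. Well-definedness uses that $L(T)<2^n$ forces $T$ to have even weight so $T+E_1(j_0)$ has odd weight, while the profile computation above ensures $L(S+E_1(j'))\geq c$ for every $j'$ and hence $L_1(S)=c$. Counting $N_c:=|\{T:L(T)=c\}|$ follows from the representation $T^{2^n}(x)=(1-x)^{2^n-c}g(x)$ with $\deg g<c$ and $g(1)=1$, giving $N_c=2^{c-1}$ for $c\geq 1$. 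This immediately proves (i): when $s\leq a$ we have $[0,2^s)\subseteq[0,Z)$, so $j_0\in[0,Z)$ automatically forces $L_{1,Z}(S)=c$, and $N_c\cdot 2^s=2^{c+s-1}$ sequences are counted.

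For part (ii) ($s>a$), I split by whether $j_0\in[0,Z)$ or $j_0\in[Z,2^s)$. The first regime contributes $N_c\cdot Z=Z\cdot 2^{c-1}$ sequences with $L_{1,Z}(S)=c$. In the second, $L_{1,Z}(S)=2^n-2^{t^*}$ where $t^*:=\max_{j'\in[0,Z)}v_2(j'-j_0)$, and the key combinatorial claim (using $Z<2^{a+1}$) is that for $a+1\leq t\leq s-1$ the event $\{t^*=t\}$ is equivalent to ``bit $t$ of $j_0$ equals $1$ and $j_0\bmod 2^t<Z$'', yielding $|\{j_0\in[Z,2^s):t^*=t\}|=Z\cdot 2^{s-t-1}$; the residual count for $t^*=a$ is then $2^s-Z\cdot 2^{s-a-1}$. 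Multiplying by $N_c$ produces the stated spectrum, and the telescoping identity $Z\cdot 2^{c-1}+\sum_{t=a+1}^{s-1}Z\cdot 2^{c+s-t-2}+(2^{c+s-1}-Z\cdot 2^{c+s-a-2})=2^{c+s-1}$ confirms that every sequence with $L_1(S)=c$ is counted exactly once.

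Part 2 ($L_1=0$) is the degenerate case: $S=E_1(j)$ for a single $j\in[0,2^n)$ and the pair $(T,j_0)$ collapses to $(0,j)$. The analysis of $t^*$ is identical to part (ii) with $2^s$ replaced by $2^n$ and $N_c$ replaced by $1$, yielding the stated counts. The main obstacle throughout will be managing the bit-level bookkeeping for $t^*$ cleanly --- in particular verifying that the events $\{t^*=t\}$ for $t\in\{a,a+1,\dots,s-1\}$ genuinely partition the ``large'' regime $[Z,2^s)$ without overlap or omission, which amounts to a one-line case split on the bit at position $t$ of $j_0$ together with the size comparison $j_0\bmod 2^t<Z$ versus $\geq Z$.
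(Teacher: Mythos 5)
Your proposal is correct and follows essentially the same route as the paper: both rest on Lemma~\ref{ErrorVectorsSpan} to isolate the unique minimizing position $j_0\in[0,2^s)$, split on $s\le a$ versus $s>a$, determine $L_{1;Z}$ from the $2$-adic valuation of $j'-j_0$ over the zone, and count positions giving each value (your bit-level count $Z\cdot 2^{s-t-1}$ is exactly the paper's proportion $Z/2^{t+1}$ per interval of length $2^{t+1}$). The only real difference is cosmetic but welcome: you obtain the count $2^{L_1-1+s}$ self-containedly via the bijection $S\leftrightarrow(T,j_0)$ with $L(T)=L_1$, whereas the paper cites \cite{SM_ISIT} for that number.
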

\begin{proof}
%From the beginning of this section we know any sequence $S$ with period $2^n$ have either
For any $2^n$-periodic sequence $S$ with odd Hamming weight, we have
$2^n-2^s<L_1(S)<2^n-2^{s-1}$ for some positive integer $s$ or $L_1
=0$.  Let us first assume that $2^n-2^s<L_1(S)<2^n-2^{s-1}$ for some
positive integer $s$. From the
proof of  Lemma \ref{ErrorVectorsSpan}, %the position of each one of
%$2^{n-s}$ error vectors are equally spaced over the period $2^n$.
%Hence, in the zone of length $2^s$, there must at least one error
%vector reaching the $L_1(S)$, that is, there exists at least one $j$
%such that $0\leq j\leq 2^s$ such that $L_1(S) = L(S+E(j))$.
 there exists exactly one $j$, $0\leq j< 2^s$, such that
$L_1(S) = L(S+E(j))$.

(i) if $s\leq a$, then $2^s\leq 2^a\leq Z$ by the definition of $a =
\lfloor \log_2(Z) \rfloor$.  Hence, in the zone of length $Z$, there
is at least one error vector reaching the $L_1(S)$, so $L_{1;Z}(S) =
L_1(S)$.
%that is, there
%exists at least one $j$ such that $0\leq j< Z$ such that $L_1(S) =
%L(S+E(j))$. Therefore we have $L_{1;Z}(S)=L_1(S)$ because
%$L_{1;Z}(S) \geq L_1(S)$.

(ii) if $s > a$, then $Z  < 2^s$. Consider $0\leq j< 2^s$ such that
$L_1(S) = L(S+E(j))$. Let $m\cdot 2^{a+1} \leq j < (m+1)2^{a+1}$ for
some nonnegative integer $m$. Let $ \bar{j} = j -m 2^{a+1} $ be the
positive integer less than $2^{a+1}$ such that $\bar{j} \equiv j
\bmod{2^{a+1}}$.  If $ Z < \bar{j} < 2^{a+1}$, then we take
$j^\prime = \bar{j} - 2^a$, which satisfies $ 0<  j^\prime < 2^a <
Z$. Since $j^\prime-j$ is an odd multiple of $2^a$, we conclude that
$L(E_1(j) + E_1(j^\prime)) = 2^n- 2^a$ because the multiplicity of
the root $1$ for the binomial $x^{j} + x^{j^\prime}$ is exactly
$2^a$. Hence $L(S+E_1(j))= L_1(S) < 2^n -2^{s-1} \leq 2^n - 2^{a} =
L(E_1(j) + E_1(j^\prime))$. Therefore $L(S+E_1(j^\prime)) =
 \max\{L(S+E_1(j)), L(E_1(j) +
E_1(j^\prime))\} = 2^n -2^a$. And for any $\hat{j}$ satisfying $ Z <
\hat{j} < 2^{a+1}$, $\hat{j}\neq j^\prime$,
%the
%multiplicity of the root $1$ for the binomial $x^{\bar{j}}
%+x^{\hat{j}}$
we have $L(E_1(j^\prime)+E_1(\hat{j}))>2^n-2^a$ because
$|\hat{j}-j^\prime| < 2^a$, accordingly we have
$L(S+E_1(\hat{j}))=\max\{L(S+E_1(j^\prime)), L(E_1(\hat{j})) +
E_1(j^\prime))\}>2^n-2^a$. Thus, $L_{1;Z} = 2^n-2^a$.

On the other hand, we assume  $0 \leq \bar{j} \ChangeMS{\leq} Z$.
For $m=0$, we must have $L_{1;Z}=L_1$ because $0\leq j < Z$. Let
$m\geq 1$.
%Because $0 \leq j
%< 2^s$, we let $k\cdot 2^{a+1} \leq j < (k+1)2^{a+1}$ for some
%nonnegative integer $k$.
We claim $L_{1;Z} = 2^n-2^{a+1+\nu_2(m)}$ where $\nu_2(m)$ is the
\emph{largest integer  such that $2^{\nu_2(m)} \mid m$}. Indeed, we
take  $j^\prime = j -m\cdot 2^{a+1}$. which satisfies $ 0< j^\prime
<  Z$. Since \Change{$j-j^\prime= m\cdot 2^{a+1}$}, we conclude that
$L(E_1(j) + E_1(j^\prime)) = 2^n- 2^{a+1+\nu_2(m)}$ because the
multiplicity of the root $1$ for the binomial $x^{j} + x^{j^\prime}$
is exactly $2^{a+1+\nu_2(m)}$. We note that $a+1+\nu_2(m) \leq s-1$
because \Change{$j- j^\prime= m2^{a+1} < 2^s$}. Hence $L(S+E(j))=
L_1(S) < 2^n -2^{s-1} \leq 2^n - 2^{a+1+\nu_2(m)} = L(E_1(j) +
E_1(j^\prime))$. Therefore $L(S+E_1(j^\prime)) = \max\{L(S+E_1(j)),
L(E_1(j) + E_1(j^\prime))\}  = 2^n -2^{a+1+\nu_2(m)}$. And for  any
$\hat{j}$ satisfying $ 0 \leq \hat{j} < Z$, $\hat{j}\neq j^\prime$,
we have $L(E_1(j^\prime)+E_1(\hat{j}))\geq 2^n-2^a$ because
$|\hat{j}-j^\prime| < 2^{a+1}$, accordingly we have
$L(S+E_1(\hat{j}))=\max\{L(S+E_1(j^\prime)), L(E_1(\hat{j})) +
E_1(j^\prime))\}\geq 2^n-2^a$.
 In this case, $L_{1;Z} = 2^n-2^{a+1+\nu_2(m)}$. Hence $L_{1, Z}(S) \in \{ L_1, 2^n-2^{s-1}, \ldots, 2^n - 2^a\}$.
We summarize these results in Table \ref{table:values} and Figure
\ref{figure:intervals}.

%\[
%\begin{tabular}{|c|c|c|}
\begin{table}
\caption{$L_{1;Z}$ values at different intervals}
\label{table:values}
\[
\begin{array}{|c|c|c|}
 \hline
                    &  \ChangeMS{  0\leq j \pmod {2^{a+1}}   \leq Z}  & \ChangeMS{ Z <  j \pmod {2^{a+1}}  < 2^{a+1}} \\
\hline
0\leq j \leq 2^{a+1} &   L_1     &                2^n-2^a  \\
2^{a+1} \leq j < 2^{a+2} & 2^n - 2^{a+1} &        2^n - 2^a  \\
\vdots & \vdots & \vdots \\
m\cdot 2^{a+1}  \leq j < (m+1)2^{a+1} &  2^n - 2^{a+1 +\nu_2(m)} & 2^n-2^a \\
\vdots & \vdots & \vdots \\
2^s-2^{a+1} \leq j < 2^s & 2^n -2^{s-1} & 2^n-2^a \\
\hline
\end{array}
%\]
%\end{tabular}
\]
\end{table}

Now we count the number of all these sequences having the $1$-error
linear complexity $L_1$ and odd Hamming weight such that $L_{1;Z}(S)
= 2^n - 2^t$ where $a+1 \leq t \leq s-1$. For each sequence $S$ with
$1$-error linear complexity $L_1$ and odd Hamming weight, we need to
count the number of error positions $j$'s such that
 $L_{1;Z} = L(S+E_1(j)) = 2^n -2^t$.
 % where $j^\prime$ is uniquely determined by $j$ in the zone.
 We prove that the proportion of $j$'s over an interval of length $2^{t+1}$  such that $L_{1;Z} = 2^n -2^t$ is
 $Z/2^{t+1}$, where $t\geq a+1$.
 %For better understanding this result readers may refer to the Figure \ref{figure:intervals} when
 %$t=a+1$.

First we show that every sub-interval $I$ of length $2^{t+1}$ in the
interval $[0, 2^s)$ contains at least one interval of length $Z$ for
possible $j$'s such that $L_{1;Z} = 2^n -2^t$. We will construct an
interval of length $Z$ for $j$'s within $I$. Because the length of
$I$ is $2^{t+1}$,  we can always choose an odd integer $m^\prime$
such that $[m^\prime 2^t,(m^\prime+1)2^t)
\subset I$.  %Let us consider another subinterval $ m^\prime 2^t =
%(m^\prime 2^{t-a-1}) 2^{a+1} \leq j < ((m^\prime 2^{t-a-1}) +
%1)2^{a+1}$.
By the above proof, there exists an interval of length $Z$ within
the interval $ m^\prime 2^t = (m^\prime 2^{t-a-1}) 2^{a+1} \leq j <
((m^\prime 2^{t-a-1}) + 1)2^{a+1}$ such that $L_{1;Z} = 2^n -
2^{a+1+\nu_2(m^\prime2^{t-a-1})} = 2^n-2^{a+1+t-a-1} = 2^n -2^t$.

Then we show that $I$
%every sub-interval of length $2^{t+1}$ for the
%interval $[0, 2^s)$
can not contain more than one intervals of length $Z$ for possible
$j$'s such that $L_{1;Z} = 2^n -2^t$.  We prove it by contradiction.
Suppose there are $j, j^\prime$ such that $0<|j^\prime-j |< 2^{t+1}$
and $L(S+E_1(j)) = L(S+E_1(j^\prime)) = 2^n -2^t$. In this case,
$L(E_1(j) + E_1(j^\prime)$=$L(S+ E_1(j) + S+E_1(j^\prime))<2^n
-2^t$.  However,  \Change{the root  $1$ of $x^j + x^{j\prime}$ is }
at most $2^t$ times, implying $L(E_1(j) + E_1(j^\prime)) \geq 2^n
-2^t$,
%This means that $L(E_1(j) + E_1(j^\prime)) \geq 2^n -2^t$.  Therefore we have either
%$L(E_1(j) + E_1(j^\prime)) = 2^n -2^t$ or $L(E_1(j) + E_1(j^\prime)) \geq 2^n -2^{t-1}$. In the former case,  $L(S+E_1(j^\prime)) = L(S+E_1(j) + E_1(j) + E_1(j^\prime)) < 2^n -2^t$ because $L(S+E_1(j) + E_1(j)) = L(E_1(j) + E_1(j^\prime))\}$. This contradicts to $L(S+E_1(j^\prime)) = 2^n -2^t$. In the latter case,  $L(E_1(j) + E_1(j^\prime)) \geq 2^n -2^{t-1}$. Hence
%$L(S+E_1(j^\prime)) = L(S+E_1(j) + E_1(j) + E_1(j^\prime)) = \max\{L(S+E_1(j) + E_1(j)), L(E_1(j) + E_1(j^\prime))\} = L(E_1(j) + E_1(j^\prime)) \geq 2^n -2^{t-1}$,
a contradiction.

Therefore, the proportion of $j$'s in each interval of length
$2^{t+1}$ such that $L_{1;Z} = 2^n -2^t$ is $Z/2^{t+1}$, for each
sequence having the $1$-error linear complexity $L_1$ and odd
Hamming weight.   \Change{Since there are $2^{L_1 - 1 + s}$
sequences with odd Hamming weight such that $2^n - \ChangeGreen{2^s}
< L_1 < 2^n -2^{s-1}$ \ChangeGreen{(see \cite{SM_ISIT}[p. 2000,
Theorem 3])}, there are $Z2^{L_1-1+s}/2^{t+1} = Z 2^{L_1+s -t -2}$
sequences having the $1$-error linear complexity $L_1$ and odd
Hamming weight such that $L_{1;Z}(S) = 2^n - 2^t$.} Similarly, for
$L_{1;Z} = L_1$, there is only one internal of length $Z$ within the
interval $[0,2^{s})$ and thus the proportion is $Z/2^s$. From Table
\ref{table:values} and Figure \ref{figure:intervals},
%it also easy to see that
the proportion of $j$'s giving $2^n - 2^{a}$ is
$1-\frac{Z}{2^{a+1}}$.

When $L_1=0$, the proof is similar and thus we omit the details.
     \end{proof}

%\begin{figure}
%\centering
%\includegraphics[width=2.5in]{}
% where an .eps filename suffix will be assumed under latex,
% and a .pdf suffix will be assumed for pdflatex
%\caption{Simulation Results}
%\label{fig_sim}
%\end{figure}

\begin{figure}
%\begin{flushleft}
\begin{center}
\includegraphics[bb=0 0 563 83, scale=0.6]{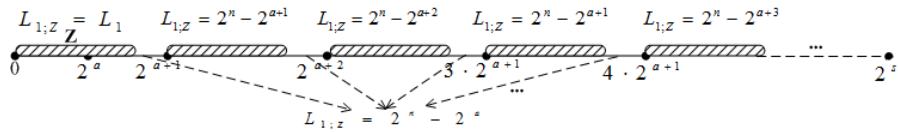}
%\includegraphics[bb=0 0 563 83, scale=0.6]{myfigure.pdf}
%\vskip 30pt
\caption{$L_{1;Z}$ values for $j$ at different intervals}
\label{figure:intervals}
%\end{flushleft}
\end{center}
\end{figure}
%\begin{example}
%For $n=4$, $Z=1$, $L_1=13$, for those sequences with odd Hamming weight,  we have the following experimental results:\\
% $|L_{1,Z}=13|=4096$, \\
% $|L_{1,Z}=14|=4096$, \\
% $|L_{1,Z}=15|=8192$, \\
%  and the number of other cases is 0. This matches the formula
%  provided in Theorem \ref{theorem_SeqOddHammingweight} for the case
%  of $s=2, a=0$.
%
%For $n=5$, $Z=3$, $L_1=27$, for those sequences with odd Hamming weight, we also have the following experimental results:\\
% $|L_{1,Z}=27|=201326592$, \\
% $|L_{1,Z}=28|=201326592$, \\
% $|L_{1,Z}=30|=134217728$, \\
%  and the number of other cases is 0. This matches the formula
%  provided in Theorem \ref{theorem_SeqOddHammingweight} for the case
%  of $s=3, a=1$.
%\end{example}
%Combining the result on $\mathcal{N}_k(c)$, we have the counting
%functions for the zone of length $Z$ as follows.
The distribution of $k$-error linear complexity is provided in
\cite{SM_IEICE} when $k\leq 4$. In particular,  the number $N_0(c)$
of $2^n$-periodic sequences with the linear complexity $c$ is
$2^{c-1}$; see \cite{MNit}. In
Theorem~\ref{theorem_SeqOddHammingweight}, we have counted the
number of sequences with odd Hamming weight achieving $(1,Z)$-error
linear complexity values. In the following, we count the number
$\mathcal{N}_{1;Z}(c)$ of all sequences achieving $(1,Z)$-error
linear complexity value $c$, without emphasizing on their Hamming
weights.

\begin{corollary}
Let  $a=\lfloor \log_2(Z) \rfloor$. The value $\mathcal{N}_{1;Z}(c)$
is equal to  {\scriptsize
\begin{eqnarray*}
\!\!\!\!\!\!\!\!
\left\{
  \begin{array}{ll}
  2^{c-1}+2^{c-1+s}, & \!\!\!\!\!\!\!\!  \mbox{if}~s\leq a, 2^n-2^s< c< 2^n-2^{s-1};\\
  (1+Z)2^{c-1}, & \!\!\!\!\!\!\!\! \mbox{if}~s>a, 2^n-2^s <  c< 2^n-2^{s-1}; \\
   \displaystyle{\sum_{s=a+1}^n \sum_{L_1=2^n-2^s+1}^{2^n-2^{s-1}-1}(1-\frac{Z}{2^{a+1}})\cdot 2^{L_1-1+s}+(1-\frac{Z}{2^{a+1}})\cdot 2^n +2^{2^n-2^a-1}}, & \mbox{if}~c=2^n-2^a; \\
  \displaystyle{ \sum_{s=t+1}^n \sum_{L_1=2^n-2^s+1}^{2^n-2^{s-1}-1}   \frac{Z}{2^{t+1}}\cdot 2^{L_1-1+s}+ \frac{Z}{2^{t+1}}\cdot 2^n +2^{2^n-2^t-1}}, &  \mbox{if}~ c=2^n-2^t, a<t \leq s-1; \\
   1+Z, &  \mbox{if}~ c=0. \\
  \end{array}
             \right.
  \end{eqnarray*}
}
 \end{corollary}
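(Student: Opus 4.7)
The plan is to partition all $2^n$-periodic binary sequences according to the parity of their Hamming weight and to add the two resulting contributions to $\mathcal{N}_{1;Z}(c)$. For a sequence $S$ of even Hamming weight, $L(S)<2^n$, and flipping any single bit produces an odd-weight sequence of linear complexity $2^n\ge L(S)$; hence $L_1(S)=L(S)$, and since the zero error vector is admissible inside the zone we also have $L_{1;Z}(S)=L(S)$ for every $Z$. Combined with the classical enumeration $N_0(c)=2^{c-1}$ for $1\le c\le 2^n-1$ (and $N_0(0)=1$) and the fact that odd Hamming weight forces $L(S)=2^n$, this shows that even-weight sequences contribute $2^{c-1}$ to $\mathcal{N}_{1;Z}(c)$ when $1\le c<2^n$ and $1$ when $c=0$.

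For the odd-weight sequences I invoke Theorem~\ref{theorem_SeqOddHammingweight}, which for every odd-weight $S$ determines $L_{1;Z}(S)$ in terms of $L_1(S)$ and counts how many $S$ realise each target value. Rather than iterating over $L_1$, I invert the description and fix $c=L_{1;Z}$, asking which rows of the theorem feed into it. If $2^n-2^s<c<2^n-2^{s-1}$, the only odd-weight sources are those with $L_1=c$ itself; the theorem then produces $2^{c-1+s}$ sequences when $s\le a$ (the zone is large enough to contain a minimizing position) and $Z\cdot 2^{c-1}$ sequences when $s>a$ (only the fraction $Z/2^s$ of minimizing positions lies in the zone), yielding the first two formulas after adding $2^{c-1}$. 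When $c=2^n-2^t$ with $t\ge a$, one gathers the ``residual'' terms $2^{L_1-1+s}-Z\cdot 2^{L_1+s-a-2}$ (for $t=a$) or the ``middle'' terms $Z\cdot 2^{L_1+s-t-2}$ (for $t>a$) contributed by every $L_1\in(2^n-2^s,2^n-2^{s-1})$ with $s\ge a+1$, respectively $s\ge t+1$, together with the corresponding $L_1=0$ terms $2^n-Z\cdot 2^{n-a-1}$ and $Z\cdot 2^{n-t-1}$; adding the $2^{2^n-2^t-1}$ even-weight sequences then produces the third and fourth formulas. Finally, $c=0$ is realised only by the all-zero sequence and by the $Z$ odd-weight sequences identified in the theorem, giving $1+Z$.

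The only genuine difficulty is bookkeeping: one has to match each row of the table in Theorem~\ref{theorem_SeqOddHammingweight} with the correct target value $c$, remembering to sum over all admissible $L_1\in(2^n-2^s,2^n-2^{s-1})$ and over all $s$, and to handle the exceptional $L_1=0$ case separately. Once these aggregations are set up, the compact forms in the corollary follow by the routine rewritings $Z\cdot 2^{L_1+s-t-2}=(Z/2^{t+1})\cdot 2^{L_1-1+s}$ and $2^{L_1-1+s}-Z\cdot 2^{L_1+s-a-2}=(1-Z/2^{a+1})\cdot 2^{L_1-1+s}$, together with their analogues at $L_1=0$.
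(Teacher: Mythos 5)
Your proposal is correct and follows the same route as the paper: split sequences by the parity of the Hamming weight, note that even-weight sequences satisfy $L_{1;Z}(S)=L_1(S)=L(S)$ and are counted by $N_0(c)=2^{c-1}$, and then aggregate the odd-weight counts from Theorem~\ref{theorem_SeqOddHammingweight} by fixing the target value $c$. The paper states this only tersely ("follows immediately"), so your explicit bookkeeping of which rows of the theorem feed into each value of $c$ is just a more detailed writing of the same argument.
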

\begin{proof}
We note that every $2^n$-periodic sequence $S$ has a linear
complexity $L(S) =c < 2^n$ if and only if it has an even Hamming
weight. In this case, $L_1(S) = L(S)$. Hence the result follows
immediately from Theorem \ref{theorem_SeqOddHammingweight}.
%Note that $\mathcal{N}_0(c)=2^{c-1}$, and for  any sequence $S$ with
%$L(S)<2^n$, we must have $L_1(S)=L(S)$. By Theorem
%\ref{theorem_SeqOddHammingweight} and considering the variations on
%the parameters $L_1$ and $s$, we have the result.
     \end{proof}

\change{The exact expectation $\mathcal{E}_{1;Z}$  can be derived from the above
counting functions, and may be used as a measure for determining the
randomness of a $2^n$-periodic binary sequence, with variations on
$Z$.  \ChangeW{ The exact formula is too complicated. Thus we omit all the details here. Instead, we  provide a concrete example for the expected values of  $L_{1;Z}$ for sequences with period $N=2^8$ in Table \ref{table:E1Z}.}

%\begin{example}
% The expected values of $L_{k;Z}$ ($N=2^n$) are as follows, which
% can be used as a new stability measure.
%For $N=2^8$, we provide the expected values of $L_{1;Z}$ in Table \ref{table:E1Z}.
%$n=8, Z=1, E_{N;Z}= 254.0000$
%
%$n=8, Z=2, E_{N;Z}= 253.5000$
%
%$n=8, Z=3, E_{N;Z}= 253.2500$
%
%$n=8, Z=4, E_{N;Z}= 253.0000$
%
%$n=8, Z=5, E_{N;Z}= 252.9375$
%
%$n=8, Z=6, E_{N;Z}= 252.8750$
%
%$n=8, Z=7, E_{N;Z}= 252.8125$
%
%$n=8, Z=8, E_{N;Z}= 252.7500$
\begin{table}
\caption{ $\mathcal{E}_{N,1;Z}$ for $N=2^8$} \label{table:E1Z}
 \centering
 \begin{tabular}{|c|c|}
\hline
$Z$ & $\mathcal{E}_{N,1;Z}$ \\
\hline
1 &  254.0000\\
\hline
 2 & 253.5000\\
\hline
 3 & 253.2500\\
\hline
 4 & 253.0000\\
\hline
 5 & 252.9375\\
\hline
 6 & 252.8750\\
\hline
 7 & 252.8125\\
\hline
 8 & 252.7500\\
\hline
\end{tabular}
\end{table}
%\end{example}

}

\section{Extension to sequences with other even periods } \label{sec:N=2^vr}
Now we consider stability of other periodic sequences with even
period $N=2^v r$ \modify{such that $r, v$ are  positive integer and
$r$ is odd}. For some types of $2^v r$-periodic binary sequences, we
can still find a proper zone of length $Z$ so that
$L_{N,k}=L_{N,k;Z}$. From the paper by Niederreiter \cite[Theorem 1,
P. 503]{Nied03}, there exists $2^v  r$-periodic binary sequence $S$
with $L_{N,k}(S)\geq N-2^v$ and $L(S)=N$, provided that
\begin{eqnarray*} \label{existence-condition}
\sum_{j=0}^k {N \choose j}< 2^{\min_{2\leq i \leq h} |C_i| },
\end{eqnarray*}
where $C_2,\ldots, C_h$ are the different cyclotomic cosets modulo
$r$. \modify{In the following,  we will reveal that
$L_{N,k}=L_{N,k;Z}$ with certain $Z \ll N$ for some of these `ideal'
cryptographic sequences.}

\begin{theorem} \label{2^vr-1-zonelength}
Let $N=2^v r$,
% \modify{such that $2^v < r-1$ and $v>0$},
\revise{$v>0$}, $r$ be \ChangeMS{ an odd prime}, and $2$ be a
primitive root modulo $r$.
%\begin{enumerate}
%\item
%If $S$ is an $N$-periodic binary sequence such that
%$L(S)=N$ and $L_{N,1}(S)\geq N-2^v$,
%%then within any consecutive zone of length $Z=2^{\lceil \log_2(N-L_{N,1}(S)) \rceil}$,
%%where $2^{s-1}<{N-L_{N,1}}<2^s$,
%%!!!$L1\neq N-2^t$ format
%then there is exactly one error vector $E_1$ such that
%$L(S+E_1)=L_{N,1}(S)$ and $Supp(E_1) \subseteq [0, Z)$, where
%$Z=2^{\lceil \log_2(N-L_{N,1}(S)) \rceil}$.
%\item
If $S$ is an $N$-periodic binary sequence such that
%$L(S)=N$ and $L_{N,k}(S)\geq N-2^v$; or
%$L(S)=N-c$ for some $0\leq c<2^v$,
%\modify{and $L(S)>L_{N,k}(S)\geq N-2^v$},
%and \change{$L(S)>L_{N,k}(S)\geq N-\min(2^v$,r-1)+1,}
\begin{eqnarray} \label{Condition-Theorem2^vr}
L(S)=N-c>L_{N,k}(S)\geq N-\min(2^v,r-2),
\end{eqnarray}
% then within any consecutive zone of length $Z=2^{\lceil \log_2(N-L_{N,k}(S)) \rceil}$,
%where $2^{s-1}<{N-L_{N,1}}<2^s$,
%!!!$L1\neq N-2^t$ format
for some nonnegative integer $c$, then there exists at least one
error vector $E_m$, $m\leq k$ such that
\[ L(S+E_m)=L_{N,k}(S) \mbox{ and } Supp(E_m) \subseteq [0, Z),\]
 where
$Z=2^{\lceil \log_2(N-L_{N,k}(S)) \rceil}$. In particular, if $L(S)= N > L_{N, 1}(S) \geq N- \min(2^v, r-2)$,  then there exists exactly one  error vector $E_1$
satisfying that  $Supp(E_1) \subseteq [0, Z)$ and
$L_{N,1}(S)=L_{N,1;Z}(S)$.

\iffalse
\item
\revise{In particular let $2^v < r-1$.} If $S$ is an $N$-periodic
binary sequence such that
 $L(S)=N-c$ and $L_{N,k}(S)=N-r+1$ for $0<c<r$, then
 %within any consecutive zone of length $Z=r$,
 there exists at least one error vector $E_m$, $m \leq k$
such that $L(S+E_m)=L_{N,k}(S)$ and $Supp(E_m) \subseteq [0, Z)$,
where  $Z=r$. In particular, $L_{N,k}(S)=L_{N,k;Z}(S)$.
%reaching $L_{N,k}(S)$, implying $L_{N,k}(S)=L_{N,k;Z}(S)$.
\fi
%\end{enumerate}
\end{theorem}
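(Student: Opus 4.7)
The plan is to carry over the philosophy of Lemma \ref{Lem:ProperZone} to the period $N = 2^v r$ by working directly with divisibility in $\mathbb{F}_2[x]$, exploiting the fact that the hypothesis $L_{N,k}(S) \geq N - \min(2^v, r-2)$ prevents the irreducible factor of degree $r-1$ from ever entering the relevant gcds. Since $2$ is a primitive root modulo $r$, over $\mathbb{F}_2$ we have $x^N - 1 = (x-1)^{2^v} f(x)^{2^v}$ with $f(x) = 1 + x + \cdots + x^{r-1}$ irreducible of degree $r-1$. Writing $c' = N - L_{N,k}(S) \leq \min(2^v, r-2) < r-1$, any error vector $E$ realizing $L_{N,k}(S)$ must satisfy $\gcd(x^N - 1, (S+E)^N(x)) = (x-1)^{c'}$, because the factor $f$ alone would contribute at least $r - 1 > c'$ to the degree; in particular $(x-1)^{c'}$ divides $(S+E)^N(x)$. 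Moreover, a direct computation in characteristic two gives $\gcd(x^N - 1, x^j + x^{j^*}) = x^{\gcd(|j - j^*|, N)} - 1$, whose multiplicity of the root $x = 1$ equals $2^{\nu_2(|j - j^*|)}$.

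The zone reduction then proceeds as follows. Let $u = \lceil \log_2 c' \rceil$ and $Z = 2^u$, so that $c' \leq Z < 2 c'$. Starting from any error vector $E_m$ achieving $L_{N,k}(S)$, I would form $E_m'$ by replacing each position $i \in \mathrm{Supp}(E_m)$ with $i \geq Z$ by $i^* := i \bmod Z$; possible collisions only decrease the Hamming weight, so $E_m'$ still has weight at most $k$ and $\mathrm{Supp}(E_m') \subseteq [0, Z)$. Because $i - i^*$ is a positive multiple of $2^u$, we have $2^{\nu_2(i - i^*)} \geq Z \geq c'$, and hence $(x-1)^{c'}$ divides $x^i + x^{i^*}$. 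Combined with $(x-1)^{c'} \mid (S+E_m)^N(x)$, this yields $(x-1)^{c'} \mid (S+E_m')^N(x)$, and since $c' \leq 2^v$ the factor $(x-1)^{c'}$ also divides $x^N - 1$, giving $L(S + E_m') \leq N - c' = L_{N,k}(S)$. Minimality of $L_{N,k}(S)$ then forces equality, establishing the first assertion.

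For the uniqueness clause in the case $k = 1$ and $L(S) = N$, existence of some $E_1(j)$ with $j \in [0, Z)$ realizing $L_{N,1}(S)$ is immediate from the above (the weight cannot drop to zero because $L(S) = N > L_{N,1}(S)$). Suppose two distinct positions $j_1, j_2 \in [0, Z)$ both yield $L(S + E_1(j_i)) = L_{N,1}(S) = N - c'$. Then $(x-1)^{c'}$ divides each $(S + E_1(j_i))^N(x)$, and hence divides their sum $x^{j_1} + x^{j_2}$, forcing $c' \leq 2^{\nu_2(|j_1 - j_2|)}$. But $0 < |j_1 - j_2| < Z = 2^u$ implies $2^{\nu_2(|j_1 - j_2|)} \leq 2^{u-1} = Z/2 < c'$, a contradiction. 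Hence the position is unique and $L_{N,1;Z}(S) = L_{N,1}(S)$.

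The main technical obstacle, and the reason this is not a verbatim port of the $2^n$-periodic proof, is that Lemma \ref{LC-AdditiveProperty} fails for general composite periods: one cannot simply assert $L(A + B) = \max(L(A), L(B))$ when the linear complexities differ. The remedy is precisely the observation that under $c' < r-1$ only the linear factor $(x-1)$ of $x^N - 1$ ever enters the bookkeeping, so everything reduces to tracking $2$-adic valuations of differences of error positions, and the $2^n$-periodic reasoning goes through with $Z$ taken as the smallest power of two dominating $c'$.
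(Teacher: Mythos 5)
Your proposal is correct and follows essentially the same route as the paper: factor $x^N-1=((x-1)\Phi_r(x))^{2^v}$, observe that the degree-$(r-1)$ irreducible factor cannot enter the gcd under the hypothesis on $L_{N,k}(S)$ so that only the multiplicity of the root $1$ matters, then shift out-of-zone error positions modulo $Z$ and settle uniqueness for $k=1$ via the $2$-adic valuation of position differences. The only (harmless) cosmetic differences are that you replace all out-of-zone positions at once and invoke minimality of $L_{N,k}(S)$ directly, where the paper iterates the replacement and first excludes $\Phi_r$ as a factor of every weight-$\le k$ perturbation.
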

\begin{proof}
 Because $2$ is primitive root  modulo the prime number $r$, the cyclotomic polynomial $\phi_r(x)$ of the order $r$ is irreducible over $\mathbb{F}_2$. Hence $X^{2^vr}-1=(X^r-1)^{2^v} = ((X-1)\phi_r(X))^{2^v}$.
Let $\alpha$ be a primitive $r$-th root of unity. That is,
$\phi_r(\alpha)=0$.

 \ChangeMS{If
$2^v \leq r-2$}, then $\alpha$ can not be a root of $S^N(x)+E_t(x)$
for any error polynomial $E_t(x)$ of Hamming weight $t\leq k$
because of the assumption $L_{N,t}(S)\geq L_{N,k}(S)\geq N-2^v$.
Otherwise, $\phi_r(x) \mid s^N(x) + E_t(x)$ and thus the greatest
common divisor of $s^N(x) + E_t(x)$ and $X^N-1$ has degree greater
than $2^v$, a contradiction.

 \ChangeMS{If $2^v > r-2$}, then $L_{N, k}(S) \geq N- r+2$. Similarly,
$\alpha$ can not be a root of $S^N(x)+E_t(x)$  for any error
polynomial $E_t(x)$ of Hamming weight $t\leq k$.

Now we only need to consider the multiplicity of root $1$ when
computing $L_{N,k}(S)$.  { %\color{red}
As in the proof of Lemma
\ref{Lem:ProperZone},  we can derive an error vector $E_m$ such that
$Supp(E_m) \subseteq [0, Z)$ and  $L_k(S+ E_m) = L_{N,k}(S)$.
Indeed, suppose there exists an error vector $E_m$  such that
$L(S+E_m) = L_{N, k}(S)$, where $Supp(E_m) = \{i_1, i_2, \ldots, i_m
\}$. Note that  $Z=2^{\lceil \log_2(N-L_{N,k}(S)) \rceil } \geq
N-L_{N, k}(S)$.   If $i_m \geq Z$, then we can define a new vector
$E_m^\prime = E_m + E_1(i_m) + E_1(i_m \bmod Z)$. Because $L(S+ E_m)
= L_{N, k} (S) \geq N-Z \geq L(E_1(i_m) + E_1(i_m \bmod Z))$, we
must have $L(S+E_m^\prime) \leq L_{N, k}(S)$ by counting the
multiplicity of $1$'s. Hence $L(S+E_m^\prime) = L_{N, k}(S)$.
Continuing this process, we can derive an error vector such that the
support is contained $[0, Z)$. }

{%\color{blue}
In particular, if $L(S)= N > L_{N, 1}(S) \geq N- \min(2^v, r-2)$,  then there exists $s> 0$ such that
$2^{s-1}< {N-L_{N,1}(S)}  \leq   2^s$. From the previous discussion, $Z= 2^s$ and there exists at least one error vector $E_m$ such that $m \leq 1$ such that $L(S+E_m) = L_{N, 1}(S)$ and $supp(E_m) \subseteq [0, 2^s)$.  Because $L(S) > L_{N, 1}(S)$, we must have $m =1$.  Suppose there are $E_1,
E_1'$ such that $L(S+E_1) = L(S+E_1')=L_{N,1}(S)$ and $Supp(E_1),
Supp(E_1') \subseteq [0, 2^s)$.  In this case, the multiplicity of
root $1$ in $S^N(x) + E_1(x)$ and $S^N(x) + E_1'(x)$ is greater than
$2^{s-1}$ respectively, however the multiplicity of root 1 of the generating
polynomial corresponding to $E_1+E_1'$ is not more than $2^{s-1}$, a
contradiction. }
%hence by Lemma \ref{LC-AdditiveProperty} we get $L(S+E_1')>L_{N,1}(S)$, contradiction.
\end{proof}

\rmv{%%%%%%%

\item
According to \cite[Theorem 1, P. 503]{Nied03}, there exists such
sequence $S$ such that $L(S)=N$ and $L_{N,1}(S)\geq N-2^v$.
 Suppose there is an error vector $E_1(j)$ such that
 $L(S+E_1(j))=L_{N,1}(S)$, where $2^{s-1}\leq {N-L_{N,1}(S)}<2^s$ for some
 integer $s$.
 Without loss of generality, we may set the zone as $[0,Z)$.
% then we have $S^N(x)+x^j=(x-1)^{N-L_{N,1}}
% f(x)$, where $f(1)\neq 0$.
For the error vector $E_1(j \bmod {2^s})$, we have
\begin{eqnarray*}
L(S+E_1(j \bmod {2^s}))=L(S+E_1(j)  +E_1(j)+E_1(j \bmod {2^s}))
\end{eqnarray*}
The multiplicity of root 1 of $S^N(x)+x^j$ is $N-L_{N,1}(S)$, and
that of $x^j+x^{j \bmod {2^s})}$ is at least $2^s>N-L_{N,1}(S)$.
Therefore, The multiplicity of root 1 of $S^N(x)+x^{j \bmod {2^s}}$
is $N-L_{N,1}(S)$. Additionally, let $\alpha$ be the primitive
$r$-th root of unity. $\alpha$ can not be the root of $S^N(x)+x^{j
\bmod {2^s}}$, otherwise $L_{N,1}(S)\leq N-{r-1}<N-2^v$. Therefore,
we get $L(S+E_1(j \bmod {2^s}))=L_{N,1}(S)$.

%Only one
Next we will prove that there is only one error vector within the
zone $[0,2^s)$. Otherwise there is another $E_1'$ such that
$L(S+E_1')=L_{N,1}(S)$. And similarly we have $L(S+E_1')=L(S+E_1+
E_1+E_1')$, but the multiplicity of root 1 of the generating
polynomial corresponding to $E_1+E_1'$ is not more than $2^{s-1}$,
hence by Lemma \ref{LC-AdditiveProperty} we get
$L(S+E_1')>L_{N,1}(S)$, contradiction.

\item
%With the condition (\ref{existence-condition}), there exist
%sequences of large $k$-error linear complexity.\\
 Let $\alpha$ be the
primitive $r$-th root of unity. Note that
$X^{2^vr}-1=(X-1)^{2^v}(1+X+\ldots+X^{r-1})^{2^v}$, then for any
error polynomial $Err_t(x)$ of Hamming weight $t\leq k$,
 $\alpha$ can not be the root of $S^N(x)+Err_t(x)$  because of $L_{N,t}(S)\geq
L_{N,k}(S)\geq N-2^v$. Now we only need to consider the multiplicity
of root 1 for estimating $L_{N,k}(S)$. We may set the zone as
$[0,Z)$, and there exists an error vector $Err_{\bar{k}}$,
$\bar{k}\leq k$ of Hamming weight $\bar{k}$ so that
$L(S+Err_{\bar{k}})=L_{N,k}$. Similarly as in the proof of Lemma
\ref{Lem:ProperZone}, from a consecutive changes on the positions of
`1' terms we can derive an error vector $E_m$.
}%%%%%%%%%%%
\iffalse
\item
Let $\alpha$ be the primitive $r$-th root of unity. Since the
multiplicity of 1 is at most $2^v < r-1$ and $L_{N, k}(S) = N-r+1$,
there exists an error vector $E_m$, $m\leq k$ reaching $L_{N,k}$,
such that the generating polynomial corresponding to $S+E_m$ will be
divisible by $\phi_r(x) = 1+X+\cdots+X^{r-1}$. Suppose $E_m$ has `1'
term positions at $i_1, i_2,\ldots, i_{m-1}, i_m$, where
$i_1<i_2<\ldots<i_m$. If $i_m>r$, since the generating polynomial of
$S+E_m+E_1(i_m)+E_1(i_m \bmod r)$ will be also divisible by
$1+X+\cdots+X^{r-1}$, implying $L_{N,k}\big(S+E_m+E_1(i_m)+E_1(i_m
\bmod r)\big)\leq N-r+1$. Thus, we get the error vector
$E_m+E_1(i_m)+E_1(i_m \bmod r)$ reaching $L_{N,k}=N-r+1$.
Consequently, we obtain an error vector reaching $L_{N,k}(S)$ with
support in $[0,r)$.
% after a series of changes on the `1' term
%positions.
\end{enumerate}
\fi

Theorem~\ref{2^vr-1-zonelength} says that if $2^v \geq r-1$ then
$Z=2^{\lceil \log_2(N-L_{N,k}(S)) \rceil} \leq 2^{\lceil \log_2(r-2)
\rceil}  < 2r$. On the other hand,  if  $2^v < r-1$ then
Theorem~\ref{2^vr-1-zonelength} gives $Z \leq 2^v$ for any
$N$-periodic binary sequence $S$ such that  $L(S)=N-c \geq
L_{N,k}(S)\geq N-2^v$.

\begin{example}
Let $S_1$ be the following binary sequence with period $16*19$.
Namely,  $S_1=
11100011001111011010100011011100110000001100000001101110010100000\\
1101000110010010000100011001010010100110010111001001101011001000110100\\
0000100100001111001000101011000010010111110101001111110101101111000110\\
1001111101111110101011100100010001000011000011101111111111001011011011\\01001010011010010001001101000$.
The linear complexity is $304$ and $1$-error linear complexity is
$301$. The zone length is $4$ and we have
$L_{304,1}(S_1)=L_{304,1;4}(S_1)$.
\end{example}

{%\color{blue}
\begin{example}
Let $S_2$ be a random generated binary
sequence with period $11*16$, $S_2=
11100111101100100110110111100100110101001010111111011011100010011\\
0001101001110010011101010100000100011111000001000011101011101101110011\\1010100011101000010000110 1101001101110010$. Then
$L(S_2)=L_1(S_2)=175$, $L_2(S_2)=169$, and the length of zone is $8$.
Indeed, we find an error vector of $L_2$  with two errors at
positions $6$ and $7$ within the zone $[0,8)$.
\end{example}
}

 We observe that the above result can be extended to
 any $N$-periodic binary sequence $S$ such that  $L(S)=N-c   \geq L_{N,k}(S)  =  N-r +1$ when $2^v < r-1$.   In this case, we can take $Z=r$.

\begin{proposition} \label{anotherResult}
Let $N=2^v r$, {$v>0$}, $r$ be \ChangeMS{an odd prime}, and $2$ be a
primitive root. If  $2^v < r-1$ and  $S$ is an $N$-periodic binary
sequence such that
 $L(S)=N-c   \geq L_{N,k}(S)  =  N-r +1 $ for some positve integer $c$, then
 there exists at least one error vector $E_m$, $m \leq k$
such that $L(S+E_m)=L_{N,k}(S)$ and $Supp(E_m) \subseteq [0, Z)$,
where  $Z=r$. In particular, $L_{N,k}(S)=L_{N,k;Z}(S)$.
\end{proposition}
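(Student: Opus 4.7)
The plan is to adapt the argument of Theorem~\ref{2^vr-1-zonelength} to the specific level $L_{N,k}(S)=N-r+1$, where the control is provided not by the multiplicity of the root $1$ but by the single factor $\phi_r(x)$. Since $2$ is a primitive root modulo the odd prime $r$, the cyclotomic polynomial $\phi_r(x)=1+x+\cdots+x^{r-1}$ is irreducible over $\F_2$, giving $X^N-1=(X-1)^{2^v}\phi_r(X)^{2^v}$. Pick any error vector $E_m$ of weight $m\le k$ with $L(S+E_m)=L_{N,k}(S)=N-r+1$, so that $\gcd(S^N(x)+E_m(x),\,X^N-1)$ has degree exactly $r-1$. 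Writing this gcd as $(X-1)^a\phi_r(X)^b$ with $a,b\le 2^v$ and $a+b(r-1)=r-1$, the hypothesis $2^v<r-1$ rules out $b=0$, forcing $b=1$ and $a=0$. Hence $\phi_r(x)\mid S^N(x)+E_m(x)$.

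Next I would push the support of $E_m$ into $[0,r)$ one position at a time. Suppose $Supp(E_m)=\{i_1<\cdots<i_m\}$ with $i_m\ge r$, and set $E_m':=E_m+E_1(i_m)+E_1(i_m\bmod r)$, whose weight is $m$ or $m-2$, in any case $\le k$. The key algebraic fact is that
\[ x^{i_m}+x^{i_m\bmod r}=x^{i_m\bmod r}\bigl(x^{i_m-(i_m\bmod r)}-1\bigr) \]
is divisible by $x^r-1$, and hence by $\phi_r(x)$. Adding this polynomial to $S^N(x)+E_m(x)$ preserves divisibility by $\phi_r(x)$, so $L(S+E_m')\le N-(r-1)=L_{N,k}(S)$; minimality of $L_{N,k}(S)$ over weights $\le k$ supplies the reverse inequality, giving $L(S+E_m')=L_{N,k}(S)$.

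Since each swap strictly decreases $\sum_j i_j$ (because $i_m\bmod r<i_m$ whenever $i_m\ge r$), iterating on the current largest index $\ge r$ terminates after finitely many rounds in an error vector $\bar E$ with $Supp(\bar E)\subseteq[0,r)$, weight $\le k$, and $L(S+\bar E)=L_{N,k}(S)$. Taking $Z=r$ in the definition of $L_{N,k;Z}(S)$ then gives $L_{N,k;Z}(S)\le L_{N,k}(S)$, and the reverse inequality is automatic. The main obstacle — and the only place where one could slip up — is confirming that each support swap preserves, rather than destroys, the $\phi_r(x)$-divisibility; this is exactly the two-line verification above, with the weight bound $\le k$ then enforcing equality via the minimality of $L_{N,k}(S)$.
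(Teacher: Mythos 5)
Your proof is correct and follows essentially the same route as the paper: the hypothesis $2^v<r-1$ forces $\phi_r(x)\mid S^N(x)+E_m(x)$ at the level $L_{N,k}(S)=N-r+1$, and then positions $\geq r$ are swapped to $i_m\bmod r$, which preserves $\phi_r$-divisibility because $x^{i_m}+x^{i_m\bmod r}$ is divisible by $x^r-1$. The only cosmetic differences are that you close the argument via minimality of $L_{N,k}(S)$ (the swapped vector still has weight $\leq k$) and make the termination and degree-decomposition steps explicit, whereas the paper instead uses $L(E_1(i_m)+E_1(i_m\bmod r))=N-r<N-r+1$ and leaves those details terse.
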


\begin{proof}
Let $\alpha$ be the primitive $r$-th root of unity. Since the
multiplicity of 1 is at most $2^v < r-1$ and $L_{N, k}(S) = N-r+1$,
there exists an error vector $E_m$, $m\leq k$ reaching $L_{N,k}$,
such that the generating polynomial corresponding to $S+E_m$ will be
divisible by $\phi_r(x) = 1+X+\cdots+X^{r-1}$. Suppose $E_m$ has
entry `1'  at positions $i_1, i_2,\ldots, i_{m-1}, i_m$, where
$i_1<i_2<\ldots<i_m$. If $i_m>r$, since the generating polynomial of
$S+E_m+E_1(i_m)+E_1(i_m \bmod r)$ will be also divisible by
$1+X+\cdots+X^{r-1}$, \modify{  implying
$L\big(S+E_m+E_1(i_m)+E_1(i_m \bmod r)\big)\leq N-r+1$. Moreover,
$L(E_1(i_m)+E_1(i_m \bmod r) = N-r$ and $L_{N, k}(S) = L(S+E_m) \leq
N-r + 1$ imply that $L\big(S+E_m+E_1(i_m)+E_1(i_m \bmod r)\big)=
N-r+1$. }
 Thus we get an error vector $E_m+E_1(i_m)+E_1(i_m \bmod r)$ reaching
$L_{N,k}=N-r+1$. Consequently, we obtain an error vector reaching
$L_{N,k}(S)$ with support in $[0,r)$.
% after a series of changes on the  `1' entries.
\end{proof}

\begin{remark}
From Theorem \ref{2^vr-1-zonelength} we obtain a small zone of
length $Z \leq 2^v$ or \ChangeRed{$Z< 2r$} such that
$L_{N,k}=L_{N,k;Z}$ for the above classes of sequences with large
linear complexity and $k$-error linear complexity. Our assumptions
on these classes of sequences are not very restricted. By Artin's
conjecture, there are
 approximately $37\%$ of all primes having $2$ as a
primitive root \cite{Moree}. By the following corollary
\ref{Cor:EventProb}, we show that under certain conditions almost
all random sequences have $k$-error linear complexity greater than
or equal to $N-2^v$ and about $50\%$ of these sequences have  linear
complexity equal to the period.  Therefore our result can be very
useful to determine the stability of many random binary sequences
with low computational cost.
\end{remark}

\begin{corollary} \label{Cor:EventProb}
Suppose $r$ is a prime, 2 is a primitive root modulo $r$ and
%$2^v<r-1$.
%\modify{$v<r$}.
\modify{$2^v \leq p(r)$, where $p(r)$ is a polynomial of the
variable $r$}.
% More Generally
Let $N=2^v r$. If
\begin{equation} \label{k-requirement-equation}
\sum_{t=0}^k {N \choose t}< \frac{2^{r-1}}{2^v},
\end{equation}
%then  we have
%\begin{eqnarray*} \label{Pro-LNK(S)}
%%\lim_{r\rightarrow \infty}
%\Pr(L_{N,k}(S)\geq N-2^v )
%%\approx%50\%.
%\rightarrow 1, \mbox { as }  r \rightarrow \infty.
%\end{eqnarray*}
%Then for $0\leq c <2^v$, $2^v<r-1$,
%\begin{eqnarray} \label{Pro-L(S)-LNK}
%%\lim_{r\rightarrow \infty}
%\Pr(L(S)\geq N-c, L_{N,k}(S)\geq N-2^v )
%%\approx%50\%.
%\rightarrow 1-2^{-1-c},  \mbox { as }  r \rightarrow \infty,
%\end{eqnarray}
%and for $0\leq c <r-2$, $2^v\geq r-1$, we have
%\begin{eqnarray*}
%\Pr(L(S)\geq N-c, L_{N,k}(S)\geq N-r+2 ) \rightarrow 1-2^{-1-c},
%\mbox { as }  r \rightarrow \infty.
%\end{eqnarray*}
\modify { for $0\leq c <\min(2^v, r-2)$, we have
\begin{eqnarray} \label{Pro-L(S)-LNK}
\Pr(L(S)\geq N-c, L_{N,k}(S)\geq N-\min(2^v, r-2) ) \rightarrow
1-2^{-1-c}, \mbox { as }  r \rightarrow \infty.
\end{eqnarray}
}
\end{corollary}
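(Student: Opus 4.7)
The plan is to decouple the analysis via the Chinese remainder theorem applied to $X^N-1$. Since $2$ is a primitive root modulo the prime $r$, $\phi_r(X)$ is irreducible of degree $r-1$ over $\mathbb{F}_2$ and $X^N-1=(X-1)^{2^v}\phi_r(X)^{2^v}$, yielding the ring isomorphism
\[\mathbb{F}_2[X]/(X^N-1)\;\cong\; R_1\times R_2,\qquad R_1:=\mathbb{F}_2[X]/(X-1)^{2^v},\;\; R_2:=\mathbb{F}_2[X]/\phi_r(X)^{2^v}.\]
For a uniformly random $S\in\mathbb{F}_2^N$ the two coordinates are independent and each uniform on its ring. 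Writing $S^N(X)=(X-1)^a\phi_r(X)^b g(X)$ with $\gcd(g,(X-1)\phi_r)=1$, we have $L(S)=N-a-b(r-1)$, and since $c<r-2$ the event $\{L(S)\ge N-c\}$ is exactly $\{a\le c\}\cap\{b=0\}$.

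Next I compute the marginals. For the $R_1$-factor, identify $R_1=\mathbb{F}_2[X]/(X^{2^v}-1)$ with the space of $2^v$-periodic binary sequences via $D_j\equiv\sum_{i\equiv j\,(\mathrm{mod}\,2^v)}s_i\pmod 2$; this $D$ is uniform on $\mathbb{F}_2^{2^v}$ and satisfies $a=2^v-L(D)$. The classical count (there are $2^{\ell-1}$ sequences of period $2^v$ with linear complexity $\ell$ for $1\le\ell<2^v$, and $2^{2^v-1}$ for $\ell=2^v$) gives $\Pr(a\le c)=1-2^{-c-1}$. For the $R_2$-factor, the quotient $R_2/\phi_r R_2\cong\mathbb{F}_{2^{r-1}}$ shows $S^N\bmod\phi_r$ is uniform on $\mathbb{F}_{2^{r-1}}$, whence $\Pr(b=0)=1-2^{-(r-1)}\to 1$.

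Then I control the $k$-error event. Unfolding the definition, $\{L_{N,k}(S)\ge N-\min(2^v,r-2)\}$ requires $a(S+T)+b(S+T)(r-1)\le\min(2^v,r-2)$ for \emph{every} $T$ with $w(T)\le k$. In Case~1 ($2^v\le r-2$) a single-line case analysis collapses this to the pure $R_2$-event ``$\phi_r\nmid(S+T)^N$ for all such $T$''; in Case~2 ($2^v>r-2$) it factors as the intersection of this $R_2$-event with the $R_1$-event ``$(X-1)^{r-1}\nmid(S+T)^N$ for all such $T$'', which is well-defined on $R_1$ since $D_{S+T}=D_S+D_T$. Each fixed bad divisibility has probability $2^{-(r-1)}$, so the union bound combined with the hypothesis $\sum_{t=0}^k\binom{N}{t}<2^{r-1}/2^v$ gives
\[\Pr\bigl(L_{N,k}(S)<N-\min(2^v,r-2)\bigr)\;<\;2^{1-v}\;\longrightarrow\;0.\]

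Finally, since $\{a\le c\}$ is an $R_1$-event and the $k$-error event splits as a product of an $R_1$- and an $R_2$-event (both of high probability), independence of the two coordinates sandwiches the joint probability:
\[\bigl(1-2^{-c-1}-2^{-v}\bigr)(1-2^{-v})\;\le\;\Pr(\mathrm{joint})\;\le\;1-2^{-c-1},\]
pinning the limit to $1-2^{-c-1}$ as $r\to\infty$. The hard part is verifying the clean factorization in Case~2 -- in particular that the $(X-1)^{r-1}$-divisibility constraints really live on $R_1$ alone -- and matching the union-bound decay against the hypothesis. The factor $1/2^v$ in $\sum_{t=0}^k\binom{N}{t}<2^{r-1}/2^v$ is precisely the slack needed to force the bad probability to $2^{1-v}$, which vanishes as $v$ grows (polynomially in $r$).
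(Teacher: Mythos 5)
Your CRT route---splitting $\F_2[X]/(X^N-1)$ into the $(X-1)^{2^v}$-part and the $\phi_r(X)^{2^v}$-part, getting $\Pr(a\le c)=1-2^{-c-1}$ exactly from the known count of $2^v$-periodic binary sequences of each linear complexity, and handling the $k$-error condition by a union bound over error patterns $T$---is genuinely different from the paper's proof, which works with $\mathcal{M}_{N,0}(N-2^v-1)$, the inequality $\big(1-\frac{1}{2^{r-1}}\big)^{2^v}>1-\frac{2^v}{2^{r-1}}$, and a G\"unther-weight/GDFT argument for the conditional law of $L(S)$. Up to the last step your argument is sound: the identifications of the events $\{L(S)\ge N-c\}=\{a\le c\}\cap\{b=0\}$ and the factorization of the $k$-error event into $R_1$- and $R_2$-constraints in your two cases are correct, and each fixed bad divisibility indeed has probability $2^{-(r-1)}$.

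The gap is in the final limit. From hypothesis (\ref{k-requirement-equation}) alone your union bound only yields a failure probability $<2^{1-v}$, and you then assert that this ``vanishes as $v$ grows (polynomially in $r$).'' But $2^v\le p(r)$ is an \emph{upper} bound on $v$; nothing in the statement forces $v\rightarrow\infty$ with $r$ (take $v=1$, $N=2r$, which is allowed). For fixed $v$ the bound $2^{1-v}$ is a constant, so your sandwich $(1-2^{-c-1}-2^{-v})(1-2^{-v})\le\Pr(\mathrm{joint})\le 1-2^{-c-1}$ does not pin the limit at $1-2^{-1-c}$. The missing ingredient is the one the paper invokes: treat $k$ as fixed (small), so that $\sum_{t=0}^k {N \choose t}\le c_0N^k\le c_0\,(p(r)\,r)^k$ is polynomial in $r$; then the same union bound gives failure probability at most $c_0\,(p(r)\,r)^k\,2^{1-r}\rightarrow 0$ as $r\rightarrow\infty$, and your sandwich closes. (A minor separate point: you should cap the multiplicities, $L(S)=N-\min(a,2^v)-\min(b,2^v)(r-1)$, since $S^N(X)$ may be divisible by $(X-1)^a$ with $a>2^v$; this does not affect the events you actually use, but the formula as written is not quite correct.)
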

\begin{proof}
For any positive intger $k$, we denote by $\mathcal{M}_{N,k}(c)$ the
number of $N$-periodic sequences with the $k$-error linear
complexity not more than $c$. Obviously,
\begin{eqnarray*}\label{kErrorUpperbound}
\mathcal{M}_{N,k}(N-2^v-1) \leq \min \left( 2^N,
\mathcal{M}_{N,0}(N-2^v-1)\sum_{t=0}^k {N \choose t} \right).
\end{eqnarray*}

Then by Proposition 1 and Lemma 1 in \cite{MNit} (page 2818), we
have
%by using the cyclotomic structure of $\Z_r$ and \cite{MNit}[p. 2818;
%Proposition 1, Lemma 1], we have
$$\mathcal{M}_{N,0}(N-2^v-1)=2^N-(2^{r-1}-1)^{2^v}\cdot 2^{2^v}.$$

If $\mathcal{M}_{N,0}(N-2^v-1)\sum_{t=0}^k {N \choose t} \leq 2^N$,
 i. e.,
\begin{eqnarray} \label{sufficient condition}
\sum_{t=0}^k {N \choose t}\leq
\change{\frac{2^N}{\mathcal{M}_{N,0}(N-2^v-1)}=}  \frac{1}{1-
(1-\frac{1}{2^{r-1}})^{2^v}},
\end{eqnarray}
\ChangeMS{then we have $\mathcal{M}_{N,k}(N-2^v-1) \leq
\mathcal{M}_{N,0}(N-2^v-1)\sum_{t=0}^k {N \choose t}.$ }
% from (\ref{kErrorUpperbound}) we derive

Denote by $\rho$ the ratio  of the number of periodic sequences
satisfying $L_{N,k}(S) \geq N-2^v$ over the number of all periodic
sequences with period $N$. Hence
\begin{eqnarray} \label{roucondition}
\rho &=& \frac{2^N -  \mathcal{M}_{N,k}(N-2^v-1)}{2^N}  \nonumber \\
     &\geq & 1 - \frac{\mathcal{M}_{N,0}(N-2^v-1)}{2^N} \sum_{t=0}^k {N \choose t} \nonumber \\
     &= & 1-\left( 1- \big(1-\frac{1}{2^{r-1}}\big)^{2^v} \right)
\sum_{t=0}^k {N \choose t}.
\end{eqnarray}
%Because $2^v<r-1$, we obtain
%\begin{eqnarray} \label{estimation}
%(1-\frac{1}{2^{r-1}})^{2^v}=1-\frac{2^v}{2^{r-1}}+{2^v \choose
%2}\left(\frac{1}{2^{r-1}}\right)^2-\cdots > 1-\frac{2^v}{2^{r-1}}.
%\end{eqnarray}

Note that
\begin{eqnarray*}
\left(1-\frac{1}{2^{r-1}}\right)^{2^v}=\sum_{d=0\atop
s=2d}^{2^{v-1}-1}\Big( {2^v \choose s}\cdot \frac{1}{2^{(r-1)\cdot
s}} -  {2^v \choose s+1}\cdot \frac{1}{2^{(r-1)\cdot
(s+1)}}\Big)+\frac{1}{2^{(r-1)\cdot 2^v}}.
\end{eqnarray*}
For $d=0$ and $s=0$, we have  ${2^v \choose s}\cdot
\frac{1}{2^{(r-1)\cdot s}} -  {2^v \choose s+1}\cdot
\frac{1}{2^{(r-1)\cdot (s+1)}} = 1-\frac{2^v}{2^{r-1}}$. Then for
$d>0$,  we have $s \geq 2$ and
\begin{eqnarray*}
\frac{{2^v \choose s}\cdot \frac{1}{2^{(r-1)\cdot s}}}{{2^v \choose
s+1}\cdot \frac{1}{2^{(r-1)\cdot (s+1)}}} =
\frac{2^{r-1}(s+1)}{2^v-s}>\frac{2^r}{2^v}>1.
\end{eqnarray*}
The last inequality holds because \modify{ we have  $r>v$ by the
assumption}. Therefore, we obtain
\begin{eqnarray} \label{estimation}
\big(1-\frac{1}{2^{r-1}}\big)^{2^v}> 1-\frac{2^v}{2^{r-1}}.
\end{eqnarray}

If $\sum_{t=0}^k {N \choose t}< \frac{2^{r-1}}{2^v}$, then by
(\ref{estimation}) we have $\frac{2^{r-1}}{2^v}< \frac{1}{1-
(1-\frac{1}{2^{r-1}})^{2^v}}$ and the condition (\ref{sufficient
condition}) holds. Therefore, from (\ref{roucondition}) and
(\ref{estimation}) we derive
\begin{eqnarray*}
 \rho > 1-\frac{2^v}{2^{r-1}} \sum_{t=0}^k {N \choose t}.
\end{eqnarray*}
%We consider prime number $r$ such that $2^v<r-1$.
For small $k$,  we have $\sum_{t=0}^k {N \choose t} \leq c_0 N^k$
for some constant $c_0$, and thus\\
%$\frac{2^v}{2^{r-1}}\sum_{t=0}^k {N \choose t}=
%O(\frac{r^k}{2^{r-1-(k+1)v}})$.
\modify{$\frac{2^v}{2^{r-1}}\sum_{t=0}^k {N \choose t} \leq 2^{1-r}
c_0 p(r)^{k+1} r^k$).}
%where $O$ is the big O notation.
Hence we must have
%$\rho\rightarrow 1$  if $n$ is large enough.
%\begin{eqnarray}\label{RouLimit}
%\lim_{ 2^v<r-1 \atop r\rightarrow \infty } \rho \rightarrow 1.
%\end{eqnarray}
\begin{eqnarray*}\label{RouLimit}
\rho \rightarrow 1  \mbox{ as } r\rightarrow \infty.
\end{eqnarray*}

This implies that almost all sequences of period $2^v r$ satisfy
$L(S)\geq L_{N,k}(S) \geq N-2^v$ as long as $r \rightarrow \infty$.
%(then $r>(k+1)v+1$ is guaranteed).

\modify { Next we prove  that once $r\rightarrow \infty$, for $0\leq
c< 2^v$ we have
\begin{eqnarray}\label{Proportion-L(S)=N}
\frac{|S: L(S)=N-c|}{|S: L(S)\geq N-2^v|} \ChangeMS{ \rightarrow}
\frac{1}{2^{c+1}}.
\end{eqnarray}
 By the relationship between the linear complexity and G\"{u}nther weight of the GDFT of sequences
\cite{MNit}[p. 2818], and almost all sequences satisfy $L(S)\geq
N-2^v$, we only consider the first column of the GDFT matrix, and
the contribution to the G\"{u}nther weight of other columns are all
$2^v$. Additionally, the elements of the first column are over
$\F_2$, and the pattern of the first column is the transpose of
$\underbrace{0 \ldots 0}_{c} \; 1
*
*
*$, where `$*$' can be 0 or 1. Hence we have (\ref{Proportion-L(S)=N}).}

%we have \modify{
%\begin{eqnarray}\label{Proportion-L(S)=N}
%\frac{|S: L(S)=N-c|}{|S: L(S)\geq N-2^v|}=\frac{1}{2^{c+1}}.
%\end{eqnarray}
%}
%%Implication: high ratio among random generated sequences.

%By (\ref{RouLimit}) and (\ref{Proportion-L(S)=N}), the event of
%%$L(S)=N$ and $L_{N,k}\geq N-2^v$
%\modify {
% $L(S)\geq N-c$ and $L_{N,k}(S)\geq N-2^v$
%}
%%\approx%50\%.
% does exist with high probability \modify{$1-2^{-1-c}$}
%% (\modify{should it be $1-\sum_{i=c+1}^{2^v} 2^{-1-i}$?})
% among random generated sequences, if $r$ is large
%enough ({\bf where do you use this condition}).

\ChangeMS{ If $2^v\leq r-2$},  then we obtain $\Pr(L(S)\geq N-c,
L_{N,k}(S) \geq N-2^v )= \Pr(L(S)\geq N-c, L(S) \geq N-2^v)
\ChangeMS{\rightarrow} \frac{1}{2} + \cdots + \frac{1}{2^{c+1}}  =
1-2^{-1-c}, \mbox { as } r \rightarrow \infty$.
% ({\bf where do you use this condition} \modify{: In this case, almost all sequences satisfy
%$L_{N,k}(S) \geq N-2^v$}).

\iffalse
%**************************************************************************\\

If $2^v\geq r-1$, then we still have $Pr(L(S)\geq N-r+2) =
\frac{1}{2} + \cdots + \frac{1}{2^{r}} =   1-2^{-r}$ because
%for moderate $v$ and very large $r$, once
%$r>(k+1)v+1$, $\frac{2^v}{2^{r-1}}\sum_{t=0}^k {N \choose t}$ tends
%to be 0, and we still have $\rho \rightarrow 1$.
%In this case,
almost all sequences of period $2^v r$ satisfy $L(S)\geq L_{N,k}(S)
\geq N-2^v$ ({\bf is this correct?} \modify{: See the proof of
(\ref{RouLimit})}).
%Because
%\begin{eqnarray*}
%Pr(L_{N,k}(S)<N-r+2) \leq Pr(L(S)<N-r+2) \sum_{t=0}^k {N \choose t},
%\end{eqnarray*}
%we have $Pr(L_{N,k}(S)<N-r+2) \leq (2^{1-r}\cdot
%N^k) \leq 2^{1-r}c_0 poly(r)^k r^k$,
Then
%%$Pr(L_{N,k}(S)<N-r+2)\rightarrow 0$ and
$Pr(L_{N,k}(S)\geq N-r+2)\rightarrow 1$ as $r \rightarrow \infty$.
(\modify{The revised proof of this part is not correct! This can
only deduce $Pr(L_(S)\geq N-r+2)\rightarrow 1$ as $r \rightarrow
\infty$, not $Pr(L_{N,k}(S)\geq N-r+2)\rightarrow 1$!} )

Therefore,  for $0\leq c <r-2$ we obtain analogously
\begin{eqnarray*}
\Pr(L(S)\geq N-c, L_{N,k}(S)\geq N-r+2 ) \rightarrow 1-2^{-1-c}.
\end{eqnarray*}
\fi

%***************Here is the proof of previous edition
%V4 for reference***************\\

%If $2^v<r-1$, we obtain $\Pr(L(S)\geq N-c, L_{N,k}(S)\geq N-2^v )
%\rightarrow 1-2^{-1-c},  \mbox { as }  r \rightarrow \infty$.

\ChangeMS{If $2^v> r-2$},
%for moderate $v$ and very large $r$, once
%$r>(k+1)v+1$, $\frac{2^v}{2^{r-1}}\sum_{t=0}^k {N \choose t}$ tends
%to be 0, and we still have $\rho \rightarrow 1$.
%In this case,
then almost all sequences of period $2^v r$ satisfy $L(S)\geq
L_{N,k}(S) \geq N-2^v$ and we obtain $Pr(L(S)\geq N-r+2)\rightarrow
1-2^{1-r}$ similarly. Because
\begin{eqnarray*}
Pr(L_{N,k}(S)<N-r+2) \leq Pr(L(S)<N-r+2) \sum_{t=0}^k {N \choose t},
\end{eqnarray*}
\ChangeMS{for small $k$} we have
$Pr(L_{N,k}(S)<N-r+2)=O(2^{1-r}\cdot N^k)=O(2^{1-r}{p(r)}^k r^k)$,
then
%$Pr(L_{N,k}(S)<N-r+2)\rightarrow 0$ and
$Pr(L_{N,k}(S)\geq N-r+2)\rightarrow 1$. Therefore, for $0\leq c
<r-2$ we obtain
\begin{eqnarray*}
\Pr(L(S)\geq N-c, L_{N,k}(S)\geq N-r+2 ) \rightarrow 1-2^{-1-c},
\end{eqnarray*}
analogously.
\end{proof}

\begin{remark}
%{\bf I don't understand}
According to the result in \cite{StevenRoman}[p. 25; Theorem 1.2.8],
for $0< \frac{k}{N}\leq 1/2$ we have
\begin{eqnarray} \label{Combinatorical-Inequality}
\sum_{t=0}^k {N \choose t} \leq 2^{N H(\frac{k}{N})},
\end{eqnarray}
where
%$H(\lambda)=-\lambda \log(\lambda)-(1-\lambda)\log(1-\lambda)$
$H(\frac{k}{N}):=-\frac{k}{N}
\log(\frac{k}{N})-(1-\frac{k}{N})\log(1-\frac{k}{N})$ is the entropy
function on the variable $\frac{k}{N}$, \change{and the base of the
$\log(\cdot)$ is 2.}

Consequently, if $k$ satisfies
\begin{eqnarray} \label{k-requirement-equation-explicit}
 N H(\frac{k}{N})<r-1-v,
\end{eqnarray}

\modify{then the condition (\ref{k-requirement-equation}) holds by
(\ref{Combinatorical-Inequality}) and
(\ref{k-requirement-equation-explicit})}. Hence we have a weaker but
explicit requirement of $k$ for (\ref{Pro-L(S)-LNK}) holds.
\change{Note that the entrophy function $H(x)$ is non-decreasing
when $0<x\leq 1/2$.}
\end{remark}

\ChangeW{ We provide the following example to demonstrate the usefulness of our results.
\begin{example}
Let $N=32*947$, i.e., $r=947$, $v=5$.  Let us consider $k=10$.
%
%\Change{Hence, for a given $S$, with high probability we can
%calculate $L_{10}(S)$ by using the zone reduction}, \ChangeGreen{but
%the search space of errors of the traditional method is estimated as
%$\sum_{t=0}^{10} {32*947 \choose t} > {32*947 \choose 10} \approx
%2^{127}$, which is infeasible}.
The traditional method of computing  $L_{N,10}(S)$ is estimated as
$\sum_{t=0}^{10} {32*947 \choose t} > {32*947 \choose 10} \approx
2^{127}$, which is infeasible. But from (\ref{k-requirement-equation-explicit}) we require
$H(\frac{k}{N})<\frac{941}{32*947}$ and thus $k\leq 96$. Hence   the condition (\ref{k-requirement-equation-explicit}) holds for $k=10$.
From Corollary \ref{Cor:EventProb} we know that almost all sequences satisfy $L_{N,10}\geq N-32$.  We can take the zone length  $Z=32$.

 If $c=1$ and $L(S)\geq N-c$, then  $3$ out of $4$ such random sequences satisfy the
condition (\ref{Condition-Theorem2^vr}) in Theorem
\ref{2^vr-1-zonelength}, and we have $L_{N,10}(S)=L_{N,10;32}(S)$.
If $L(S)\geq N-2$,  then $7$ out of $8$ such random sequences
satisfy (\ref{Condition-Theorem2^vr}), and we have
$L_{N,10}(S)=L_{N,10;32}(S)$. The percentage of these sequences
satisfying the condition (\ref{Condition-Theorem2^vr}) grows as $c$
increases. Finally,  if $L(S)\geq N-31$,  then almost any random
sequence satisfy
%Theorem~  %% Remove, here is the condition
(\ref{Condition-Theorem2^vr}), and  $L_{N,10}(S)=L_{N,10;32}(S)$.
\end{example}
}

Now we move to other types of sequences with period $N=2^v r$, where
$r$ is a composite. More generally, we have
\begin{eqnarray} \label{cyclotomic-polynomial-decomposition}
\modify{1-X^N=(1-X^r)^{2^v}=\left(\prod_{d|r} \Phi_d(X)
\right)^{2^v}},
\end{eqnarray}
 where
$\Phi_d(X)$ is the $d$-th cyclotomic polynomial. \ChangeW{We do
not require that $\Phi_d(X)$ is irreducible over $\F_2$,  which is
required for existing fast algorithms of computing the ($k$-error)
linear complexity. Then, we can similarly obtain the following
result by analyzing the multiplicity of root 1 when the $k$-error
linear complexity is large.}

\begin{theorem}\label{2^vr-composite-extension}
Suppose $N=2^v p_1 ^{s_1} p_2 ^{s_2}\ldots p_n^{s_n}$,
 \change{$v>0$,} \ChangeMS{$p_i$ is odd prime},
%where $2^v<p_i-1$ \change{(or $2^v\geq p_i-1$)},
and 2 is a primitive root modulo $p_i$ and $p_i^2$ for all $1\leq i
\leq n$. For any $N$-periodic binary sequence $S$ such that
%$L(S)>L_{N,k}(S)\geq N-\min(2^v,p_1,\ldots, p_n)+1$,
\revise{$L(S)>L_{N,k}(S)\geq N-\min(2^v,p_1-2,\ldots, p_n-2)$,}
%then within any consecutive zone of
%length $Z=2^{\lceil \log_2(N-L_{N,k}(S)) \rceil}$,
% there exists at
%least one error vector $E_m$, $m\leq k$ such that
%$L(S+E_m)=L_{N,k}$, implying $L_{N,k}(S)=L_{N,k;Z}(S)$.
 there exists a zone of length $Z=2^{\lceil \log_2(N-L_{N,k}(S))
 \rceil}$ such that $L_{N,k}(S)=L_{N,k;Z}(S)$.

%\begin{enumerate}
%\item
%Suppose $N=2^v\cdot p^s$, 2 is a primitive root modulo $p$, $p^2$
%and $2^v< p-1$. For the $N$-periodic binary sequence $S$ having the
%property $L(S)>L_{N,k}(S)\geq N-2^v$, then within any consecutive
%zone of length $Z=2^{\lceil \log_2(N-L_{N,k}(S)) \rceil}$,
%%where $2^{s-1}<{N-L_{N,1}}<2^s$,
%%!!!$L1\neq N-2^t$ format
%there exists at least one error vector $E_m$, $m\leq k$ such that
%$L(S+E_m)=L_{N,k}$, implying $L_{N,k}(S)=L_{N,k;Z}(S)$.
%
%
%\item
%Suppose $N=2^v\cdot p q$, 2 is a primitive root modulo $p$ and $q$,
%and $2^v< p-1, q-1$. For the $N$-periodic binary sequence $S$ having
%the property $L(S)>L_{N,k}(S)\geq N-2^v$, then within any
%consecutive zone of length $Z=2^{\lfloor \log_2(N-L_{N,k}(S))
%\rfloor+1}$,
%%where $2^{s-1}<{N-L_{N,1}}<2^s$,
%%!!!$L1\neq N-2^t$ format
%there exists at least one error vector $E_m$, $m\leq k$ such that
%$L(S+E_m)=L_{N,k}(S)$, implying $L_{N,k}(S)=L_{N,k;Z}(S)$.
%
%
%\item
%Similarly, we can extend these results to $N=2^v p_1 ^{s_1} p_2
%^{s_2}\ldots p_n^{s_n}$, where $2^v<p_1,p_2, \ldots p_n$ and 2 is a
%primitive root modulo $p_1, p_1^2; p_2, p_2^2 \ldots\; \textrm{and}\
%p_n, p_n^2$.
%\end{enumerate}
\end{theorem}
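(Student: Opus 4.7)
The plan is to extend the proof of Theorem~\ref{2^vr-1-zonelength} by controlling all nontrivial cyclotomic factors simultaneously. From the factorization~(\ref{cyclotomic-polynomial-decomposition}), I would first argue that for every divisor $d>1$ of $r=p_1^{s_1}\cdots p_n^{s_n}$, the multiplicative order of $2$ modulo $d$ strictly exceeds $M:=\min(2^v,p_1-2,\ldots,p_n-2)$. Indeed, since $2$ is a primitive root modulo $p_i$ and $p_i^2$, it is primitive modulo every $p_i^j$, and for any prime $p_j\mid d$ we have $p_j-1=\mathrm{ord}_{p_j}(2)\mid\mathrm{ord}_d(2)$, hence $\mathrm{ord}_d(2)\geq p_j-1>p_j-2\geq M$.

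Next I would show that for any error vector $E_t$ of Hamming weight at most $k$ with $L(S+E_t)=L_{N,k}(S)$, no primitive $d$-th root of unity with $1<d\mid r$ can be a root of $S^N(X)+E_t(X)$. Otherwise its minimal polynomial over $\F_2$, which is irreducible of degree $\mathrm{ord}_d(2)>M$, would divide $\gcd(S^N(X)+E_t(X),X^N-1)$, contradicting $L(S+E_t)\geq N-M$. Consequently $\gcd(S^N(X)+E_t(X),X^N-1)=(X-1)^{m_A}$ with $m_A=N-L_{N,k}(S)\leq M\leq 2^v$, so in particular $Z=2^{\lceil\log_2(N-L_{N,k}(S))\rceil}\leq 2^v$ divides $N$.

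The zone-reduction step now parallels Lemma~\ref{Lem:ProperZone}. If $i\in Supp(E_t)$ with $i\geq Z$, put $j=i\bmod Z$ and replace $E_t$ by $E_t'=E_t+E_1(i)+E_1(j)$. The associated polynomial $X^i+X^j=X^j(X^{i-j}+1)$ contributes multiplicity $2^{\nu_2(i-j)}\geq Z\geq m_A$ to the factor $X-1$; although it may carry nontrivial cyclotomic factors $\Phi_{d'}$ (where $d'$ divides the odd part of $i-j$), the previous paragraph guarantees that $S^N(X)+E_t(X)$ is coprime to each such $\Phi_{d'}$. Hence $\Phi_{d'}$ does not divide $S^N(X)+E_t(X)+X^i+X^j$ either, and only the multiplicity of $X-1$ in the new gcd matters. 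By the minimality of $L_{N,k}(S)$ this multiplicity is forced to be exactly $m_A$, so $L(S+E_t')=L_{N,k}(S)$. Iterating the substitution to sweep every index $\geq Z$ out of the support yields an error vector $E_m$ of weight at most $k$ with $Supp(E_m)\subseteq[0,Z)$ and $L(S+E_m)=L_{N,k}(S)$, whence $L_{N,k;Z}(S)=L_{N,k}(S)$. The main obstacle is precisely this simultaneous cyclotomic-factor bookkeeping: the primitivity hypothesis on $2$ modulo each $p_i^2$ must be used uniformly across \emph{all} divisors $d>1$ of $r$ in order to guarantee that no nontrivial cyclotomic factor can slip into the relevant gcd at any step of the iterated reduction, leaving only the power-of-two multiplicity of $X-1$ to control via the choice of $Z$.
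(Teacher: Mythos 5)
Your proposal is correct and follows essentially the same route as the paper: lower-bound the degree of every irreducible factor of $X^N-1$ other than $X-1$ by $\min_i(p_i-1)$ via the order of $2$ modulo the divisors of the odd part of $N$, conclude that only the multiplicity of the root $1$ can occur in the relevant gcd, and then sweep the support into $[0,Z)$ by the same shift-modulo-$Z$ substitution used in the paper's Lemma 3 and Theorem 3. The only difference is organizational: you handle all divisors $d>1$ uniformly where the paper argues in three cases ($2^v p^s$, $2^v pq$, general $N$), and your explicit appeal to the minimality of $L_{N,k}(S)$ correctly closes the step where the modified error vector's linear complexity must not drop below $L_{N,k}(S)$.
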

\begin{proof}
%\change{Let $2^v<p_i-1$ for all $1\leq i \leq n$.}\\
 First, we
suppose $N=2^v\cdot p^s$, where $2^v< p-1$ and $2$ is a primitive root modulo $p$ and $p^2$ respectively.  Obviously $2$ is a primitive root of $p^s$ for any
integer $s$ (see for example \cite{RC}).
From (\ref{cyclotomic-polynomial-decomposition}) we derive
\[
X^N-1=\left( (X-1)\Phi_p(X)\Phi_{p^2}(X)\cdots\Phi_{p^s}(X)
\right)^{2^v}.
\]

 Because the degree of each irreducible polynomial
\modify{$\Phi_{p^i}(X)$} is $\phi(p^i)=p^i-p^{i-1}\revise{\geq
p-1}$,
%that is
%bigger than $2^v$,
we only need to consider the multiplicity of root $1$ for estimating
$L_{N,k}$. { %\color{red}
The rest of proof is similar to the proof
of Theorem \ref{2^vr-1-zonelength}.}

Secondly, suppose $N=2^v\cdot p \cdot q$, $2$ is a primitive root
modulo $p$ and $q$.
% and $2^v< p-1, q-1$.
 Let $c$ be the least
integer such that $2^c \equiv 1 \bmod~({pq})$, then
\Change{$\Phi_{pq}(X)$} can be factorized into
$\frac{(p-1)(q-1)}{c}$ irreducible polynomials, each with degree
$c$. In addition, because $2$ is the primitive root modulo $p$, we
have $(p-1) \mid c$ and thus
%$c\geq p-1>2^v$.
\revise{$c\geq p-1$. Similarly,  $c\geq q-1$.}
% Analogously we have$q-1|c$.
From (\ref{cyclotomic-polynomial-decomposition}) we derive
$X^N-1=\left( (X-1)\Phi_p(X)\Phi_{q}(X)\Phi_{pq}(X) \right)^{2^v}$,
implying the degree of any irreducible factors except $X-1$ is
%large than $2^v$,
\Change{greater than or equal to} $\min(p-1,q-1)$. Hence we only need to
consider the multiplicity of root 1. {%\color{red}
The rest of proof follows.}

Finally, if $2$ is the primitive root of \ChangeRed{$p_1^{s_1}$,
$\ldots$, $p_n^{s_n}$} for any integers $i_1, \ldots, i_n$, then we
obtain \Change{ $X^N-1=\left( (X-1) \prod_{0\leq t_j \leq s_j \atop
j=1,2,\ldots,n} \Phi_{p_1^{t_1}\cdots p_j^{t_j}\cdots p_n^{t_n}} (X)
\right)^{2^v}$}. Similarly, the degree of each irreducible factor of
\Change{$\Phi_{p_1^{t_1}\cdots p_j^{t_j}\cdots p_n^{t_n}}(X)$} is
%bigger than $2^v$.
\revise{no less than $\min(p_1-1, \ldots, p_n-1)$.  Hence we only need to
consider the multiplicity of the root 1 analogously.}
%\change{If $2^v\geq p_i-1$ for all $1\leq i \leq n$, we only need to
%consider the multiplicity of the root 1 analogously. Hence we
%complete the proof.}
%Combining the above two cases, we have the conclusion.
\end{proof}
%The proportion of these types of sequences is closely related to the
%cardinalities of cyclotomic cosets of $\Z_{p_1 ^{s_1} p_2^{s_2}\ldots p_n^{s_n}}$.

\begin{example} From computer experiments, there are many examples satisfying the above theorem for sequences with period $N$ such as $N=32*37, \change{4*11^2},  8*11^2, 8*11*13$;
\change{$16*11^2, 16*11*13$; \modify{11*16*19, 13*16*19.}} The
global stability can be effectively determined by local stability
within a much smaller zone.  For example, for the random generated
$N=4*11^2$-periodic sequence\\
$S=010110001110010010101001101011100110010010001110011100010110001000\\
0010100100100101001000011110000101011100011110000000111010011001101110\\
0110101110100010110111000011101010100101101100001000101001111010010110\\
1110000000011010111010000000100100001011011100000100111001110101110111\\
0010101000100111010010100011110011110100110110001000110000010001110100\\
0110101010110001000100000000000001011011001111000011010000100111110000\\
00101110000000100000111011001100110100111101000000100100100110000100$,
we have $L(S)=L_1(S)=482$, $L_2(S)=480$, and the zone $[0, 4)$.
Indeed, we can find an error vector \change{ $E_2$ of $L_2$ such
that error positions are  $1$ and $3$ } within the zone $[0,4)$.
\end{example}

{ %\color{green}
\begin{remark}
For a binary sequence $S$ with the period $N=2^v m$, $m\geq 1$,
$v>0$. If $L_{N,k}(S)\geq \frac{N}{2}$, then the derived length of
zone in Theorems \ref{thm:EquivalenceofZoneLC},
\ref{2^vr-1-zonelength}, and \ref{2^vr-composite-extension}
\begin{displaymath}
Z=2^{\lceil \log_2(N-L_{N,k}(S)) \rceil}\leq 2^{\lceil \log_2(N/2)
\rceil}= 2^{\lceil \log_2(N)-1 \rceil}<N,
\end{displaymath}
so $Z$ becomes effective: the larger $L_{N,k}(S)$, the smaller $Z$.
\end{remark}
}

\ChangeRed{
\section*{Acknowledgment}
Ming Su expresses his sincere thanks for the hospitality during his
visit to School of Mathematics and Statistics, Carleton University,
Canada. }

\end{document}